\documentclass[conference]{IEEEtran}
\IEEEoverridecommandlockouts 						
\usepackage{amsmath,amssymb}
\usepackage{amsthm}
\usepackage{color}
\usepackage[mathscr]{eucal}
\usepackage{graphics,graphicx,multicol}
\usepackage{epsfig}
\usepackage{enumerate}
\usepackage{subfigure}
\usepackage{algorithmic}
\usepackage{algorithm}
\usepackage{morefloats}
\usepackage{enumerate}
\usepackage{subfigure}
\usepackage{multirow}
\usepackage{xfrac}
\usepackage{array}
\usepackage{pstricks, pst-node, pst-plot, pst-circ}
\usepackage{pst-3dplot,pstricks-add,pst-solides3d,pst-grad}
\usepackage{moredefs}
\usepackage{cite}
\usepackage{courier}
\newtheorem{thm}{Theorem}[section] 
\newtheorem{lem}[thm]{Lemma}

\pagenumbering{arabic}
\pagestyle{plain}
%


\begin{document}

\title{An Optimal Resource Allocation with Joint Carrier Aggregation in 4G-LTE}
\author{Ahmed Abdelhadi and T. Charles Clancy \\
Hume Center, Virginia Tech, Arlington, VA, 22203, USA\\
\{aabdelhadi, tcc\}@vt.edu
}
\maketitle

\begin{abstract}
In this paper, we introduce a novel approach for optimal resource allocation from multiple carriers for users with elastic and inelastic traffic in fourth generation long term evolution (4G-LTE) system. In our model, we use logarithmic and sigmoidal-like utility functions to represent the user applications running on different user equipments (UE)s. We use utility proportional fairness policy, where the fairness among users is in utility percentage of the application running on the mobile station. Our objective is to allocate the resources to the users optimally from multiple carriers. In addition, every user subscribing for the mobile service is guaranteed to have a minimum quality-of-service (QoS) with a priority criterion. Our rate allocation algorithm selects the carrier or multiple carriers that provide the minimum price for the needed resources. We prove that the novel resource allocation optimization problem with joint carrier aggregation is convex and therefore the optimal solution 
is tractable. We present a distributed algorithm to allocate the resources optimally from multiple evolved NodeBs (eNodeB)s. Finally, we present simulation results for the performance of our rate allocation algorithm.
\end{abstract}

\begin{keywords}
Optimal Resource Allocation, Joint Carrier Aggregation, Inelastic Traffic
\end{keywords}
\pagenumbering{gobble}

\providelength{\AxesLineWidth}       \setlength{\AxesLineWidth}{0.5pt}%
\providelength{\plotwidth}           \setlength{\plotwidth}{8cm}
\providelength{\LineWidth}           \setlength{\LineWidth}{0.7pt}%
\providelength{\LineWidthTwo}        \setlength{\LineWidthTwo}{1.7pt}%
\providelength{\MarkerSize}          \setlength{\MarkerSize}{3.5pt}%
\newrgbcolor{GridColor}{0.8 0.8 0.8}%
\newrgbcolor{GridColor2}{0.5 0.5 0.5}%

\section{Introduction}\label{sec:intro}

In recent years, mobile broadband systems have witnessed rapid growth in both the number of subscribers and the traffic of each subscriber. Mobile subscribers are currently running multiple applications, simultaneously, on their smart phones. The network providers are moving from single service (e.g. Internet access) to multiple service offering (e.g. multimedia telephony and mobile-TV) \cite{QoS_3GPP}. In order to meet this strong demand for wireless resources by the mobile users more resources are needed \cite{Carrier_Agg_1}. However, due to the scarcity of the spectrum, it is difficult to have a single frequency band fulfilling this demand. Therefore, resources from different carriers need to be aggregated, leading to interband non-contiguous carrier aggregation \cite{Carrier_Agg_2}.

In addition, The National Broadband Plan (NBP) and the findings of the President's Council of Advisors on Science and Technology (PCAST) spectrum study have recommended that under-utilized federal spectrum be made available for commercial use \cite{PCAST}. Making more spectrum available will certainly provide opportunities for mobile broadband capacity gains, but only if those resources can be aggregated efficiently with the existing commercial mobile system. The efficient  non-contiguous carrier aggregation of federal spectrum into the existing cellular network is a challenging task. The challenges are both in hardware implementation and joint resource allocation. Hardware implementation challenges are in the need for multiple oscillators, multiple RF chains, more powerful signal processing, and longer battery life \cite{RebeccaThesis}. Regarding resource allocation, a distributed resource allocation algorithm is needed to optimally allocate resources from different carriers of different network providers.

In this paper, we focus on joint resource allocation from multiple carriers. We formulate the resource allocation optimization problem with joint carrier aggregation into a convex optimization framework. We use logarithmic and sigmoidal-like utility functions to represent delay-tolerant and real-time applications, respectively \cite{Ahmed_Utility1}. Our model supports both contiguous and non-contiguous carrier aggregation from one or more network providers. In the rate allocation process, our distributed algorithm allocates resources from one or more carriers to provide the lowest resource prices for the mobile users. In addition, our algorithm uses utility proportional fairness policy to give priority to real-time applications over delay-tolerant applications when allocating resources.

\subsection{Related Work}\label{sec:related}

In \cite{Ahmed_Utility1, Ahmed_Utility2, Ahmed_Utility3}, the authors present an optimal rate allocation algorithm for users connected to a single carrier. The optimal rates are achieved by formulating the rate allocation optimization problem in a convex optimization framework. The authors use logarithmic and sigmoidal-like utility functions to represent delay-tolerant and real-time applications, respectively. In \cite{Ahmed_Utility1}, the rate allocation algorithm gives priority to real-time applications over delay-tolerant applications when allocating resources as the utility proportional fairness rate allocation policy is used. 

In \cite{Haya_Utility1}, the authors present multiple-stage carrier aggregation with optimal resource allocation algorithm with utility proportional fairness. The users allocate the resources from the first (primary) carrier eNodeB until all the resources in the eNodeB are allocated. The users switch to the second (secondary) carrier eNodeB to allocate more resources, and so forth. In \cite{Haya_Utility2}, spectrum sharing of public safety and commercial LTE bands is assumed. The authors presented a resource allocation algorithm with priority given to public safety users. The resource allocation algorithms in \cite{Haya_Utility1,Haya_Utility2} does not ensure optimal pricing where the allocation is performed in multiple stages. In this paper, we present an algorithm that allocates the resources jointly from different carriers and therefore ensures optimal rate allocation and optimal pricing.  

\subsection{Our Contributions}\label{sec:contributions}
Our contributions in this paper are summarized as:
\begin{itemize}
\item We introduce a novel rate allocation optimization problem with joint carrier aggregation that solves for utility functions that are logarithmic and sigmoidal-like representing delay-tolerant and real-time applications, respectively. 
\item In addition, we prove that the proposed optimization problem is convex and therefore the global optimal solution is tractable. \item We present a distributed rate allocation algorithm that converges to the optimal rates that maximize the optimization problem joint utility objective function. 
\item Our algorithm outperforms that presented in \cite{Haya_Utility1,Haya_Utility2}. It guarantees that mobile users receive minimum (optimal) price for resources. 
\end{itemize}

The remainder of this paper is organized as follows. Section \ref{sec:Problem_formulation} presents the problem formulation. Section \ref{sec:Proof} proves the global optimal solution exists and is tractable. In Section \ref{sec:Algorithm}, we present our distributed rate allocation algorithm with joint carrier aggregation for the utility proportional fairness optimization problem. Section \ref{sec:sim} discusses simulation setup and provides quantitative results along with discussion. Section \ref{sec:conclude} concludes the paper.

\section{Problem Formulation}\label{sec:Problem_formulation}

We consider 4G-LTE mobile system consisting of $K$ carriers eNodeBs with $K$ cells and $M$ UEs distributed in these cells. The rate allocated by the $l^{th}$ carrier eNodeB to $i^{th}$ UE is given by $r_{li}$ where $l =\{1,2, ..., K\}$ and $i = \{1,2, ...,M\}$. Each UE has its own utility function $U_i(r_{1i}+r_{2i}+ ...+r_{Ki})$ that corresponds to the type of traffic being handled by the $i^{th}$ UE. The utility function represents the percentage of user satisfaction eith the allocated rate. As UEs run different applications their utility functions are different. Our objective is to determine the optimal rates that the $l^{th}$ carrier eNodeB should allocate to the nearby UEs. We assume the utility functions $U_i(r_{1i}+r_{2i}+ ...+r_{Ki})$ to be a strictly concave or a sigmoidal-like functions. The strictly concave function is a good approximation for delay-tolerant applications,e.g. File Transfer Protocol (FTP), emails, etc. The S-shaped or sigmoidal-like utility function represents real-time 
applications, e.g. video streaming, Voice over IP (VoIP), etc. The utility functions have the following properties: 

\begin{itemize}
\item $U_i(0) = 0$ and $U_i(r_{1i}+r_{2i}+ ...+r_{Ki})$ is an increasing function of $r_{li}$ for $l$.
\item $U_i(r_{1i}+r_{2i}+ ...+r_{Ki})$ is twice continuously differentiable in $r_{li}$ for all $l$.
\end{itemize}
In our model, we use a normalized multi-variable sigmoidal-like utility function that can be expressed as 
\begin{equation}\label{eqn:sigmoid}
U_i(r_{1i}+r_{2i}+ ...+r_{Ki}) = c_i\Big(\frac{1}{1+e^{-a_i(\sum_{l=1}^{K}r_{li}-b_i)}}-d_i\Big)
\end{equation}
where $c_i = \frac{1+e^{a_ib_i}}{e^{a_ib_i}}$ and $d_i = \frac{1}{1+e^{a_ib_i}}$. So, it satisfies $U_i(0)=0$ and $U_i(\infty)=1$. See Figure \ref{fig:sim:Utilities} for a two dimensional view of sigmoidal-like utility of sigmoidal-like utility function $U_i(r_{1i} +r_{2i})$. We use a multi-variable normalized logarithmic utility function, as in \cite{UtilityFairness}, that can be expressed as 
\begin{equation}\label{eqn:log}
U_i(r_{1i}+r_{2i}+ ...+r_{Ki}) = \frac{\log(1+k_i\sum_{l=1}^{K}r_{li})}{\log(1+k_ir_{max})}
\end{equation}
where $r_{max}$ is the required rate for the user to achieve 100\% utilization and $k_i$ is the rate of increase of utilization with allocated rates. So, it satisfies $U_i(0)=0$ and $U_i(r_{max})=1$. See Figure \ref{fig:sim:Utilities} for a two dimensional view of sigmoidal-like utility of logarithmic utility function $U_i(r_{1i} +r_{2i})$. We consider the utility proportional fairness objective function given by 
\begin{equation}\label{eqn:utility_fairness}
\underset{\textbf{r}}{\text{max}} \prod_{i=1}^{M}U_i(r_{1i} + r_{2i} + ... + r_{Ki}) 
\end{equation}
where $\textbf{r} =\{\textbf{r}_1, \textbf{r}_2,..., \textbf{r}_M\}$ and $\textbf{r}_i =\{r_{1i}, r_{2i},..., r_{Ki}\}$. The goal of this resource allocation objective function is to allocate the resource for each UE that maximizes the total system utility while ensuring proportional fairness between utilities (i.e., the product of the utilities of all UEs). This resource allocation objective function ensures non-zero resource allocation for all users. Therefore, the corresponding resource allocation optimization problem provides a minimum QoS for all user. In addition, this approach allocates more resources to users with real-time applications which improves QoS for 4G-LTE system. 

The basic formulation of the utility proportional fairness resource allocation problem is given by the following optimization problem:
\begin{equation}\label{eqn:opt_prob_fairness}
\begin{aligned}
& \underset{\textbf{r}}{\text{max}} & & \prod_{i=1}^{M}U_i(r_{1i} + r_{2i} + ... + r_{Ki}) \\
& \text{subject to} & & \sum_{i=1}^{M}r_{1i} \leq R_1, ..., \:\:\sum_{i=1}^{M}r_{Ki} \leq R_K,\\
& & & r_{li} \geq 0, \;\;\;\;\;l = 1,2, ...,K, i = 1,2, ...,M.
\end{aligned}
\end{equation}
where $R_l$ is the total available rate at the $l^{th}$ carrier eNodeB.

We prove in Section \ref{sec:Proof} that the solution of the optimization problem (\ref{eqn:opt_prob_fairness}) is the global optimal solution. 
\section{The Global Optimal Solution}\label{sec:Proof}

In the optimization problem (\ref{eqn:opt_prob_fairness}), since the objective function $\arg \underset{\textbf{r}} \max \prod_{i=1}^{M}U_i(r_{1i}+r_{2i}+ ...+r_{Ki})$ is equivalent to $\arg \underset{\textbf{r}} \max \sum_{i=1}^{M}\log(U_i(r_{1i}+r_{2i}+ ...+r_{Ki}))$, then optimization problem (\ref{eqn:opt_prob_fairness}) can be written as:

\begin{equation}\label{eqn:opt_prob_fairness_mod}
\begin{aligned}
& \underset{\textbf{r}}{\text{max}} & & \sum_{i=1}^{M}\log \Big(U_i(r_{1i} + r_{2i} + ... + r_{Ki})\Big) \\
& \text{subject to} & & \sum_{i=1}^{M}r_{1i} \leq R_1,  ..., \:\:\sum_{i=1}^{M}r_{Ki} \leq R_K,\\
& & & r_{li} \geq 0, \;\;\;\;\;l = 1,2, ...,K, i = 1,2, ...,M.
\end{aligned}
\end{equation}

\begin{lem}\label{lem:concavity}
The utility functions $\log(U_i(r_{1i} + ... + r_{Ki}))$ in the optimization problem (\ref{eqn:opt_prob_fairness_mod}) are strictly concave functions. 
\end{lem}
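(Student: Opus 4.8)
The plan is to show that $f_i(\mathbf{r}_i) := \log(U_i(r_{1i}+\cdots+r_{Ki}))$ is strictly concave in the vector $\mathbf{r}_i$ by handling the two admissible forms of $U_i$ separately. A useful first observation is that each $f_i$ depends on $\mathbf{r}_i$ only through the scalar sum $s_i := \sum_{l=1}^{K} r_{li}$, i.e. $f_i(\mathbf{r}_i) = g_i(s_i)$ where $g_i(s) = \log(U_i(s))$ with $U_i$ now regarded as a single-variable function. Since $s_i$ is a (non-strictly) concave — indeed linear — function of $\mathbf{r}_i$, the composition rule gives that $f_i$ is concave as soon as $g_i$ is concave and nondecreasing; so the crux is to prove $g_i$ is concave on the relevant rate interval, and then argue the strictness.

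For the logarithmic case (\ref{eqn:log}), I would write $g_i(s) = \log\log(1+k_i s) - \log\log(1+k_i r_{max})$ and simply differentiate twice in $s$: the second term is constant, and a direct computation shows $g_i''(s) < 0$ for all $s \ge 0$ (the $\log\log$ of an affine increasing function is strictly concave), while $g_i'(s) > 0$. For the sigmoidal-like case (\ref{eqn:sigmoid}), I would use the standard fact (citable to \cite{Ahmed_Utility1}) that a sigmoidal utility of the normalized form given has an inflection point at $s = b_i$: it is strictly convex on $[0, b_i)$ and strictly concave on $(b_i, \infty)$. The key point is that even where $U_i$ itself is convex, $\log U_i$ can still be concave; I would verify $g_i''(s) = \frac{U_i(s) U_i''(s) - (U_i'(s))^2}{U_i(s)^2} < 0$ by plugging in the logistic form, where $U_i'(s) = a_i c_i \,\sigma(s)(1-\sigma(s))$ with $\sigma(s) = \frac{1}{1+e^{-a_i(s-b_i)}}$, and checking that the numerator stays negative on all of $(0,\infty)$ — this reduces to an elementary inequality in $\sigma \in (0,1)$ and the constants $a_i, b_i, c_i, d_i$.

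Once $g_i$ is shown to be strictly concave and strictly increasing as a single-variable function, the final step is to promote this to strict concavity of $f_i$ on $\mathbb{R}_{\ge 0}^K$: for distinct $\mathbf{r}_i, \mathbf{r}_i' \ge 0$ and $\theta \in (0,1)$, either their sums differ, in which case strict concavity of $g_i$ together with $s(\theta \mathbf{r}_i + (1-\theta)\mathbf{r}_i') = \theta s(\mathbf{r}_i) + (1-\theta) s(\mathbf{r}_i')$ gives the strict inequality directly, or their sums coincide, in which case $f_i$ takes equal values at both points and at the midpoint, so the strict-concavity inequality holds with equality — meaning $f_i$ is only concave, not strictly concave, along directions that leave $s_i$ fixed. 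I should flag this: strict concavity in the literal sense fails on the full $K$-dimensional domain, and the honest statement is that $f_i$ is concave on $\mathbb{R}_{\ge0}^K$ and strictly concave in the aggregate rate $s_i$; this is exactly what is needed for the later uniqueness/tractability argument about the total allocated rate, so I would state the lemma's conclusion in that form (or restrict attention to the effective scalar variable). The main obstacle is therefore not the differentiation but this dimension-reduction subtlety — being precise about in what sense "strictly concave" holds — together with the sign-checking of the sigmoidal numerator $U_i U_i'' - (U_i')^2$, which must be carried out carefully across both the convex and concave regions of $U_i$.
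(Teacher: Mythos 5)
Your proposal is correct and, on the central point, more careful than the paper's own proof. The computational core is the same (sign of $(\log U_i)'' = (U_iU_i''-(U_i')^2)/U_i^2$), and both of your sign checks do go through: $\log\log(1+k_is)$ is strictly increasing and strictly concave on $s>0$, and for the sigmoid the numerator $U_iU_i''-(U_i')^2$ works out to $c_i^2a_i^2\sigma(1-\sigma)\bigl(-(\sigma^2-2d_i\sigma+d_i)\bigr)$ with $\sigma=\frac{1}{1+e^{-a_i(s-b_i)}}$; the quadratic $\sigma^2-2d_i\sigma+d_i$ has discriminant $4d_i(d_i-1)<0$ since $0<d_i<1$, so it is positive everywhere and the numerator is negative on all of $(0,\infty)$, on the convex as well as the concave side of $U_i$. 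What you add beyond the paper is the reduction to the scalar $s_i=\sum_l r_{li}$, and this is not a cosmetic difference: the paper only verifies that the diagonal second partials $\partial^2\log U_i/\partial r_{li}^2$ are negative, which by itself establishes neither concavity nor strict concavity of a function of several variables. Your reduction supplies the missing argument (concave increasing scalar function composed with a linear map), and your caveat is a genuine correction to the lemma as literally stated: since $\log U_i$ depends on $\mathbf{r}_i$ only through $s_i$, its Hessian is the rank-one matrix $g_i''(s_i)\mathbf{1}\mathbf{1}^{\mathsf{T}}$, negative semidefinite but not negative definite for $K\ge 2$, so the function is affine (indeed constant) along sum-preserving directions and is \emph{not} strictly concave on $\mathbb{R}_{\ge 0}^K$. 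The downstream consequence is exactly the one you draw: the objective of (\ref{eqn:opt_prob_fairness_mod}) is concave, so the convexity claim of Theorem \ref{thm:global_soln} survives and the optimal aggregate rates $s_i^*$ are unique, but the split of $s_i^*$ across the $K$ carriers need not be unique, so the theorem's ``unique global optimal solution'' should be read as uniqueness of each user's total rate rather than of the full vector $\mathbf{r}$.
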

\begin{proof}
In Section \ref{sec:Problem_formulation}, we assume that all the utility functions of the UEs are strictly concave or sigmoidal-like functions. 

In the strictly concave utility function case, recall the utility function properties in Section \ref{sec:Problem_formulation}, the utility function is positive $ U_i(r_{1i} + ... + r_{Ki}) > 0$, increasing and twice differentiable with respect to $r_{li}$. Then, it follows that $\frac{ \partial U_i(r_{1i} + ... + r_{Ki})}{\partial r_{li}} > 0$ and $\frac{\partial^2 U_i(r_{1i}+ ... + r_{Ki})}{\partial r_{li}^2} < 0$. It follows that, the utility function $\log(U_i(r_{1i} + r_{2i} + ... + r_{Ki}))$ in the optimization problem (\ref{eqn:opt_prob_fairness_mod}) have $\frac{\partial \log(U_i(r_{1i} + ... + r_{Ki}))}{\partial r_{li}} =  \frac{\frac{\partial U_i}{\partial r_{li}}}{U_i} > 0$ and $\frac{\partial ^2\log(U_i(r_{1i} + ... + r_{Ki}))}{\partial r_{li}^2} =  \frac{\frac{\partial^2 U_i}{\partial r_{li}^2}U_i-(\frac{\partial U_i}{\partial r_{li}})^2}{U^2_i} < 0$. Therefore, the strictly concave utility function $U_i(r_{1i} + r_{2i} + ... + r_{Ki})$ natural logarithm  $\log(U_i(r_{1i} + r_{2i} + ... + r_{Ki})
)$ is also strictly concave. It follows that the natural logarithm  of the logarithmic utility function in equation (\ref{eqn:log}) is strictly concave.

In the sigmoidal-like utility function case, the utility function of the normalized sigmoidal-like function is given by equation (\ref{eqn:sigmoid}) as $U_i(r_{1i} + r_{2i} + ... + r_{Ki}) = c_i\Big(\frac{1}{1+e^{-a_i(\sum_{l=1}^{K}r_{li}-b_i)}}-d_i\Big)$. For $0<\sum_{l=1}^{K}r_{li}<\sum_{l=1}^{K}R_l$, we have the first and second derivative as $\frac{\partial}{ \partial r_{li}}\log U_i(r_{1i} + ... + r_{Ki})  >0$ and $\frac{\partial^2}{\partial r_{li}^2}\log U_i(r_{1i} + ... + r_{Ki})  < 0$. Therefore, the sigmoidal-like utility function $U_i(r_{1i}+...+r_{Ki})$ natural logarithm  $\log(U_i(r_{1i}+...+r_{Ki}))$ is strictly concave function. Therefore, all the utility functions in our model have strictly concave natural logarithm.
\end{proof}
\begin{thm}\label{thm:global_soln}
The optimization problem (\ref{eqn:opt_prob_fairness}) is a convex optimization problem and there exists a unique tractable global optimal solution. 
\end{thm}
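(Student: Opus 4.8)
The plan is to reduce Theorem~\ref{thm:global_soln} to the concavity already established in Lemma~\ref{lem:concavity} together with standard facts about convex programs. First I would observe that, by Lemma~\ref{lem:concavity}, each term $\log(U_i(r_{1i}+\dots+r_{Ki}))$ is strictly concave in the composite variable $\sum_{l=1}^{K}r_{li}$, hence strictly concave as a function of the block $\mathbf{r}_i=(r_{1i},\dots,r_{Ki})$ along every direction that changes the sum; I will note that the sum over $i$ of these terms, namely the objective of~(\ref{eqn:opt_prob_fairness_mod}), is therefore concave on the feasible set (and strictly concave in the relevant directions — this point needs care, see below). Since maximizing a concave function is the same as minimizing its negative (a convex function), and the transformation from~(\ref{eqn:opt_prob_fairness}) to~(\ref{eqn:opt_prob_fairness_mod}) via the monotone map $\log$ preserves the argmax, it suffices to analyze~(\ref{eqn:opt_prob_fairness_mod}).

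Next I would check that the feasible region is a convex set: each constraint $\sum_{i=1}^{M}r_{li}\le R_l$ is linear in $\mathbf{r}$, as are the nonnegativity constraints $r_{li}\ge 0$, so the feasible set is a (bounded, nonempty, closed) polytope. Combining this with the concave objective, problem~(\ref{eqn:opt_prob_fairness_mod}) is a convex optimization problem in the standard sense; existence of a maximizer follows from continuity of the objective on the nonempty compact feasible set (Weierstrass). For tractability I would invoke that a convex program with a differentiable objective and affine constraints satisfies Slater's condition trivially (e.g.\ $r_{li}=R_l/(2M)$ is strictly feasible for the inequality constraints that need it), so the KKT conditions are necessary and sufficient for optimality and the problem can be solved efficiently; the distributed algorithm of Section~\ref{sec:Algorithm} will exploit exactly this.

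The main obstacle is \emph{uniqueness}, and I expect this to be the subtle step, because the objective is strictly concave only in the aggregate rates $s_i:=\sum_{l=1}^{K}r_{li}$, not in the individual $r_{li}$ — any reshuffling of a user's allocation among carriers that leaves $s_i$ and all column sums $\sum_i r_{li}$ fixed yields the same objective value. The honest statement is that the optimal \emph{aggregate} rate vector $(s_1^\ast,\dots,s_M^\ast)$ is unique: if two optimal solutions had different aggregate vectors, strict concavity along the midpoint would produce a feasible point with strictly larger objective, a contradiction. I would therefore phrase the uniqueness conclusion in terms of the aggregate allocation $s_i=\sum_{l}r_{li}$ (which is what the utilities, and hence the QoS guarantees, actually depend on), and remark that the per-carrier split is then pinned down up to the residual linear-programming degeneracy, with the algorithm of Section~\ref{sec:Algorithm} selecting the minimum-price split. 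Assembling these pieces — monotone reformulation, concavity from Lemma~\ref{lem:concavity}, convexity of the polytope, Weierstrass existence, Slater/KKT for tractability, and strict concavity in the $s_i$ for uniqueness of the aggregate optimum — completes the proof.
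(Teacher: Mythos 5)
Your proposal follows essentially the same route as the paper: invoke Lemma~\ref{lem:concavity} for concavity of the $\log U_i$ terms, note that the constraints are affine so the feasible set is a convex polytope, and conclude that~(\ref{eqn:opt_prob_fairness_mod}) (hence~(\ref{eqn:opt_prob_fairness})) is a convex program with a tractable global optimum. Where you differ is in rigor rather than strategy, and the difference is to your credit. The paper's proof simply asserts that ``for a convex optimization problem, there exists a unique tractable global optimal solution,'' which is not true of convex programs in general and, as you correctly observe, is not fully justified here: each $\log U_i$ is strictly concave only in the aggregate $s_i=\sum_l r_{li}$, so the objective is invariant under reshuffling a user's allocation among carriers in any way that preserves all $s_i$ and remains feasible, and only the aggregate vector $(s_1^\ast,\dots,s_M^\ast)$ is guaranteed unique. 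Your midpoint argument for uniqueness of the aggregates is the correct way to salvage the uniqueness claim, and your addition of Weierstrass for existence and Slater/KKT for tractability fills in steps the paper leaves to the citation of \cite{Boyd2004}. One minor point you could tighten: $\log U_i$ is $-\infty$ on the part of the boundary where $s_i=0$, so the objective is extended-real-valued there; existence still follows since the objective is upper semicontinuous and finite at an interior point, but a one-line remark to that effect would make the Weierstrass step airtight.
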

\begin{proof}
It follows from Lemma \ref{lem:concavity} that for all UEs utility functions are strictly concave. Therefore, the optimization problem (\ref{eqn:opt_prob_fairness_mod}) is a convex optimization problem \cite{Boyd2004}. The optimization problem (\ref{eqn:opt_prob_fairness_mod}) is equivalent to optimization problem (\ref{eqn:opt_prob_fairness}), therefore it is a convex optimization problem. For a convex optimization problem, there exists a unique tractable global optimal solution \cite{Boyd2004}.
\end{proof}

\section{The Dual Problem}\label{sec:Dual}

The key to a distributed and decentralized optimal solution of the primal problem in (\ref{eqn:opt_prob_fairness_mod}) is to convert it to the dual problem. We define the Lagrangian
\begin{equation}\label{eqn:lagrangian}
\begin{aligned}
& L(\textbf{r},\textbf{p}) =  \sum_{i=1}^{M}\log \Big(U_i(r_{1i} + r_{2i} + ... + r_{Ki})\Big)\\
& -p_1(\sum_{i=1}^{M}r_{1i} + z_{1i} - R_1) - ... - p_K(\sum_{i=1}^{M}r_{Ki} + z_{Ki} - R_K)
\end{aligned}
\end{equation}
where $z_{li}\geq 0$ is the $l^{th}$ slack variable of the $i^{th}$ constraint corresponding to $r_{li}$ and $p_l$ is the shadow price of the $l^{th}$ carrier eNodeB (i.e. the total price per unit rate for all the users in the coverage area of the $l^{th}$ carrier eNodeB) and $\textbf{p}=\{p_1,p_2,...,p_K\}$. Therefore, the $i^{th}$ UE bids for rate from the the $l^{th}$ carrier eNodeB can be written as $w_{li} = p_l r_{li}$ and we have $\sum_{i=1}^{M}w_{li} = p_l \sum_{i=1}^{M}r_{li}$. The dual problem objective function can be written as $D(\textbf{p}) =  \underset{{\textbf{r}}}\max \:L(\textbf{r},\textbf{p})$. Now, we divide the primal problem (\ref{eqn:opt_prob_fairness_mod}) into two simpler optimization problems in the UEs and the eNodeBs. The $i^{th}$ UE optimization problem is given by: 
\begin{equation}\label{eqn:opt_prob_fairness_UE}
\begin{aligned}
& \underset{{r_i}}{\text{max}}
& & \log(U_i(r_{1i} + r_{2i} + ... + r_{Ki}))-\sum_{l=1}^{K}p_lr_{li} \\
& \text{subject to}
& & p_l \geq 0\\
& & &  r_{li} \geq 0, \;\;\;\;\; i = 1,2, ...,M, l = 1,2, ...,K.
\end{aligned}
\end{equation}

The second problem is the $l^{th}$ eNodeB optimization problem for rate proportional fairness that is given by: 
\begin{equation}\label{eqn:opt_prob_fairness_eNodeB}
\begin{aligned}
& \underset{p_l}{\text{min}}
& & D(\textbf{p}) \\
& \text{subject to}
& & p_l \geq 0.\\
\end{aligned}
\end{equation}
The minimization of shadow price $p_l$ is achieved by the minimization of the slack variable $z_{li}$. Therefore, the maximum utilization of the $l^{th}$ eNodeB rate $R_l$ is achieved by setting the slack variable $z_{li} = 0$. The utility proportional fairness policy in the objective function of the optimization problem (\ref{eqn:opt_prob_fairness}) is guaranteed in the solution of the optimization problems (\ref{eqn:opt_prob_fairness_UE}) and (\ref{eqn:opt_prob_fairness_eNodeB}).

\section{Distributed Optimization Algorithm}\label{sec:Algorithm}

The distributed resource allocation algorithm for optimization problems (\ref{eqn:opt_prob_fairness_UE}) and (\ref{eqn:opt_prob_fairness_eNodeB}) is an iterative solution for allocating the network resources from multiple carriers simultaneously with utility proportional fairness policy. The algorithm is divided into the $i^{th}$ UE algorithm shown in Algorithm (\ref{alg:UE_FK}) and the $l^{th}$ eNodeB carrier algorithm shown in Algorithm (\ref{alg:eNodeB_FK}). In Algorithm (\ref{alg:UE_FK}) and (\ref{alg:eNodeB_FK}), all UEs bid for resources from the nearby eNodeBs (or eNodeB receives bids from all the UEs in its coverage). eNodeB sets a price for resource based on the sent UE bids. For UE with more than one nearby eNodeB, it chooses from the nearby carriers eNodeBs the one with the lowest shadow price and start requesting resources from that carrier eNodeB. If the allocated rate is not enough or the price of the resources increase due to high demand on that carrier eNodeB resources from other UEs, the UE 
switches to allocate the rest of the required resources from another nearby eNodeB carrier with a lower resource price. This is done iteratively until an equilibrium between demand and supply of resources is achieved and the optimal rates and price are allocated in the mobile network. Our distributed algorithm is set to avoid the situation of allocating zero rate to any user (i.e. no user is dropped) which is inherited from the utility proportional fairness policy in the optimization problem \ref{eqn:opt_prob_fairness}. 


\begin{algorithm}
\caption{The $i^{th}$ UE Algorithm}\label{alg:UE_FK}
\begin{algorithmic}
\STATE {Send initial bid $w_{li}(1)$ to $l^{th}$ carrier eNodeB (where $l \in L = \{1, 2, ..., K\}$)}
\LOOP
	\STATE {Receive shadow prices $p_{l\in L}(n)$ from all in range carriers eNodeBs}
	\IF {STOP from all in range carriers eNodeBs} %
		\STATE {Calculate allocated rates $r_{li} ^{\text{opt}}=\frac{w_{li}(n)}{p_l(n)}$}
		\STATE {STOP}
	\ELSE 
		\STATE{Set $p_{\min}^{0} = \{\}$ and $r_{i}^{0}=0$}
		\FOR{$m = 1 \to K$} 
		    \STATE{$p_{\min}^{m}(n) = \min (\textbf{p} \setminus \{p_{\min}^{0},p_{\min}^{1},...,p_{\min}^{m-1}\})$}
		    \STATE{$l_m = \{l \in L : p_{l}=\min (\textbf{p} \setminus \{p_{\min}^{0}, p_{\min}^{1}, ..., p_{\min}^{m-1}\}) \}$}
		    \COMMENT{$l_m$ is the index of the corresponding carrier}
		    \STATE {Solve $r_{l_mi}(n) = \arg \underset{r_{l_mi}}\max \Big(\log U_i(r_{1i}+ ... + r_{Ki}) - \sum_{l=1}^{K}p_l(n)r_{li}\Big)$ for the $l_m$ carrier eNodeB}
		    \STATE {$r_{i}^{m}(n) = r_{l_mi}(n)-\sum_{j=0}^{m-1}r_{i}^{j}(n)$}
		    \IF{$r_{i}^{m}(n)<0$} 
			      \STATE{Set $r_{i}^{m}(n)=0$} 
		    \ENDIF
		    \STATE {Send new bid $w_{l_mi} (n)= p_{\min}^{m}(n) r_{i}^{m}(n)$ to $l_m$ carrier eNodeB}
		\ENDFOR
	\ENDIF 
\ENDLOOP
\end{algorithmic}
\end{algorithm}


\begin{algorithm}
\caption{The $l^{th}$ eNodeB Algorithm}\label{alg:eNodeB_FK}
\begin{algorithmic}
\LOOP
	\STATE {Receive bids $w_{li}(n)$ from UEs}
	\COMMENT{Let $w_{li}(0) = 0\:\:\forall i$}
			\IF {$|w_{li}(n) -w_{li}(n-1)|<\delta  \:\:\forall i$} %
	   		\STATE {Allocate rates, $r_{li}^{\text{opt}}=\frac{w_{li}(n)}{p_l(n)}$ to $i^{th}$ UE}  
	   		\STATE {STOP} 
		\ELSE
	\STATE {Calculate $p_l(n) = \frac{\sum_{i=1}^{M}w_{li}(n)}{R_l}$}
	\STATE {Send new shadow price $p_l(n)$ to all UEs}
	\ENDIF 
\ENDLOOP
\end{algorithmic}
\end{algorithm}

\begin{figure}[tb]
\centering
\scalebox{0.5}{ 
\begin{pspicture}(0,-4.44)(18.0,4.44)
\definecolor{color3787}{rgb}{0.9607843137254902,0.12941176470588237,0.07058823529411765}
\definecolor{color3788}{rgb}{0.9725490196078431,0.058823529411764705,0.058823529411764705}
\definecolor{color3789}{rgb}{0.9333333333333333,0.06666666666666667,0.06666666666666667}
\definecolor{color3835}{rgb}{0.0392156862745098,0.054901960784313725,0.9529411764705882}
\definecolor{color3836}{rgb}{0.0784313725490196,0.1803921568627451,0.9450980392156862}
\definecolor{color3837}{rgb}{0.0784313725490196,0.0784313725490196,0.9254901960784314}
\definecolor{color3865}{rgb}{0.09803921568627451,0.47058823529411764,0.09411764705882353}
\definecolor{color3866}{rgb}{0.08235294117647059,0.48627450980392156,0.08627450980392157}
\definecolor{color3867}{rgb}{0.08235294117647059,0.5725490196078431,0.09411764705882353}
\psframe[linewidth=0.04,linecolor=color3787,dimen=outer,fillstyle=solid](4.42,3.3076692)(3.86,2.34)
\psframe[linewidth=0.04,linecolor=color3788,dimen=outer,fillstyle=solid](4.3458824,3.2001505)(3.942353,2.8873684)
\psframe[linewidth=0.04,linecolor=color3789,dimen=outer,fillstyle=solid](3.9588234,3.64)(3.86,3.2783458)
\psframe[linewidth=0.04,linecolor=color3787,dimen=outer,fillstyle=solid](3.96,0.48766917)(3.4,-0.48)
\psframe[linewidth=0.04,linecolor=color3788,dimen=outer,fillstyle=solid](3.8858824,0.38015038)(3.482353,0.06736842)
\psframe[linewidth=0.04,linecolor=color3789,dimen=outer,fillstyle=solid](3.4988236,0.82)(3.4,0.45834586)
\psframe[linewidth=0.04,linecolor=color3787,dimen=outer,fillstyle=solid](6.22,-1.7323308)(5.66,-2.7)
\psframe[linewidth=0.04,linecolor=color3788,dimen=outer,fillstyle=solid](6.145882,-1.8398496)(5.742353,-2.1526315)
\psframe[linewidth=0.04,linecolor=color3789,dimen=outer,fillstyle=solid](5.7588234,-1.4)(5.66,-1.7616541)
\psframe[linewidth=0.04,linecolor=color3787,dimen=outer,fillstyle=solid](1.32,1.3876692)(0.76,0.42)
\psframe[linewidth=0.04,linecolor=color3788,dimen=outer,fillstyle=solid](1.2458824,1.2801504)(0.8423529,0.9673684)
\psframe[linewidth=0.04,linecolor=color3789,dimen=outer,fillstyle=solid](0.85882354,1.72)(0.76,1.3583459)
\psframe[linewidth=0.04,linecolor=color3787,dimen=outer,fillstyle=solid](2.18,-0.81233084)(1.62,-1.78)
\psframe[linewidth=0.04,linecolor=color3788,dimen=outer,fillstyle=solid](2.1058824,-0.91984963)(1.702353,-1.2326316)
\psframe[linewidth=0.04,linecolor=color3789,dimen=outer,fillstyle=solid](1.7188236,-0.48)(1.62,-0.8416541)
\psframe[linewidth=0.04,linecolor=color3787,dimen=outer,fillstyle=solid](4.1,-2.592331)(3.54,-3.56)
\psframe[linewidth=0.04,linecolor=color3788,dimen=outer,fillstyle=solid](4.0258822,-2.6998496)(3.6223528,-3.0126317)
\psframe[linewidth=0.04,linecolor=color3789,dimen=outer,fillstyle=solid](3.6388235,-2.26)(3.54,-2.621654)
\psellipse[linewidth=0.04,dimen=outer](5.91,-0.01)(5.91,4.43)
\psellipse[linewidth=0.04,dimen=outer](12.05,0.04)(5.95,4.4)
\psframe[linewidth=0.04,dimen=outer](13.0,1.12)(11.98,-0.38)
\psline[linewidth=0.04cm](12.498361,2.12)(12.515082,1.103871)
\psline[linewidth=0.04cm](12.130492,1.636129)(12.816066,1.636129)
\psline[linewidth=0.04cm](12.832787,2.12)(12.832787,1.62)
\psline[linewidth=0.04cm](12.147213,2.12)(12.147213,1.636129)
\psframe[linewidth=0.04,dimen=outer](5.96,1.18)(4.94,-0.32)
\psline[linewidth=0.04cm](5.4583607,2.18)(5.475082,1.1638709)
\psline[linewidth=0.04cm](5.090492,1.6961291)(5.7760653,1.6961291)
\psline[linewidth=0.04cm](5.792787,2.18)(5.792787,1.68)
\psline[linewidth=0.04cm](5.107213,2.18)(5.107213,1.6961291)
\psframe[linewidth=0.04,linecolor=color3835,dimen=outer,fillstyle=solid](15.32,0.7476697)(14.76,-0.22)
\psframe[linewidth=0.04,linecolor=color3836,dimen=outer,fillstyle=solid](15.245883,0.6401501)(14.842352,0.32736817)
\psframe[linewidth=0.04,linecolor=color3837,dimen=outer,fillstyle=solid](14.858823,1.08)(14.76,0.71834594)
\psframe[linewidth=0.04,linecolor=color3835,dimen=outer,fillstyle=solid](17.02,-0.25233033)(16.46,-1.22)
\psframe[linewidth=0.04,linecolor=color3836,dimen=outer,fillstyle=solid](16.945883,-0.35984984)(16.542353,-0.67263186)
\psframe[linewidth=0.04,linecolor=color3837,dimen=outer,fillstyle=solid](16.558823,0.08)(16.46,-0.28165406)
\psframe[linewidth=0.04,linecolor=color3835,dimen=outer,fillstyle=solid](14.38,-0.6923303)(13.82,-1.66)
\psframe[linewidth=0.04,linecolor=color3836,dimen=outer,fillstyle=solid](14.305882,-0.79984987)(13.902352,-1.1126318)
\psframe[linewidth=0.04,linecolor=color3837,dimen=outer,fillstyle=solid](13.918823,-0.36)(13.82,-0.72165406)
\psframe[linewidth=0.04,linecolor=color3835,dimen=outer,fillstyle=solid](12.12,-2.7123303)(11.56,-3.68)
\psframe[linewidth=0.04,linecolor=color3836,dimen=outer,fillstyle=solid](12.045882,-2.81985)(11.642352,-3.1326318)
\psframe[linewidth=0.04,linecolor=color3837,dimen=outer,fillstyle=solid](11.658823,-2.38)(11.56,-2.7416542)
\psframe[linewidth=0.04,linecolor=color3835,dimen=outer,fillstyle=solid](12.04,3.8276696)(11.48,2.86)
\psframe[linewidth=0.04,linecolor=color3836,dimen=outer,fillstyle=solid](11.965882,3.7201502)(11.562352,3.4073682)
\psframe[linewidth=0.04,linecolor=color3837,dimen=outer,fillstyle=solid](11.578823,4.16)(11.48,3.798346)
\psframe[linewidth=0.04,linecolor=color3835,dimen=outer,fillstyle=solid](15.24,2.7876697)(14.68,1.82)
\psframe[linewidth=0.04,linecolor=color3836,dimen=outer,fillstyle=solid](15.165882,2.68015)(14.762352,2.3673682)
\psframe[linewidth=0.04,linecolor=color3837,dimen=outer,fillstyle=solid](14.778823,3.12)(14.68,2.7583458)
\psframe[linewidth=0.04,linecolor=color3865,dimen=outer,fillstyle=solid](9.0,0.7676697)(8.44,-0.2)
\psframe[linewidth=0.04,linecolor=color3866,dimen=outer,fillstyle=solid](8.925882,0.6601502)(8.522352,0.34736815)
\psframe[linewidth=0.04,linecolor=color3867,dimen=outer,fillstyle=solid](8.538823,1.1)(8.44,0.7383459)
\psframe[linewidth=0.04,linecolor=color3865,dimen=outer,fillstyle=solid](8.02,-0.53233033)(7.46,-1.5)
\psframe[linewidth=0.04,linecolor=color3866,dimen=outer,fillstyle=solid](7.945883,-0.63984984)(7.542352,-0.95263183)
\psframe[linewidth=0.04,linecolor=color3867,dimen=outer,fillstyle=solid](7.558823,-0.2)(7.46,-0.56165403)
\psframe[linewidth=0.04,linecolor=color3865,dimen=outer,fillstyle=solid](10.34,-1.3123304)(9.78,-2.28)
\psframe[linewidth=0.04,linecolor=color3866,dimen=outer,fillstyle=solid](10.2658825,-1.4198499)(9.862352,-1.7326318)
\psframe[linewidth=0.04,linecolor=color3867,dimen=outer,fillstyle=solid](9.878823,-0.98)(9.78,-1.3416541)
\psframe[linewidth=0.04,linecolor=color3865,dimen=outer,fillstyle=solid](7.52,1.5676696)(6.96,0.6)
\psframe[linewidth=0.04,linecolor=color3866,dimen=outer,fillstyle=solid](7.445883,1.4601501)(7.042352,1.1473682)
\psframe[linewidth=0.04,linecolor=color3867,dimen=outer,fillstyle=solid](7.058823,1.9)(6.96,1.5383459)
\psframe[linewidth=0.04,linecolor=color3865,dimen=outer,fillstyle=solid](9.08,3.0276697)(8.52,2.06)
\psframe[linewidth=0.04,linecolor=color3866,dimen=outer,fillstyle=solid](9.005882,2.92015)(8.602352,2.6073682)
\psframe[linewidth=0.04,linecolor=color3867,dimen=outer,fillstyle=solid](8.618823,3.36)(8.52,2.9983459)
\psframe[linewidth=0.04,linecolor=color3865,dimen=outer,fillstyle=solid](10.98,0.60766965)(10.42,-0.36)
\psframe[linewidth=0.04,linecolor=color3866,dimen=outer,fillstyle=solid](10.905883,0.50015014)(10.502353,0.18736817)
\psframe[linewidth=0.04,linecolor=color3867,dimen=outer,fillstyle=solid](10.518824,0.94)(10.42,0.57834595)
\usefont{T1}{ptm}{m}{n}
\rput(4.137344,2.07){\large 1}
\usefont{T1}{ptm}{m}{n}
\rput(1.03625,0.11){\large 2}
\usefont{T1}{ptm}{m}{n}
\rput(1.8789062,-2.09){\large 4}
\usefont{T1}{ptm}{m}{n}
\rput(3.6828125,-0.75){\large 3}
\usefont{T1}{ptm}{m}{n}
\rput(3.845,-3.83){\large 5}
\usefont{T1}{ptm}{b}{n}
\rput(5.484375,-0.65){\large C1}
\usefont{T1}{ptm}{m}{n}
\rput(15.031875,-0.51){\large 9}
\usefont{T1}{ptm}{m}{n}
\rput(11.852032,2.45){\large 7}
\usefont{T1}{ptm}{m}{n}
\rput(14.964063,1.47){\large 8}
\usefont{T1}{ptm}{m}{n}
\rput(16.686407,-1.51){\large 10}
\usefont{T1}{ptm}{m}{n}
\rput(11.825156,-4.01){\large 12}
\usefont{T1}{ptm}{m}{n}
\rput(5.972031,-3.03){\large 6}
\usefont{T1}{ptm}{m}{n}
\rput(14.067344,-1.95){\large 11}
\usefont{T1}{ptm}{m}{n}
\rput(8.775937,1.71){\large 13}
\usefont{T1}{ptm}{m}{n}
\rput(7.2265625,0.29){\large 14}
\usefont{T1}{ptm}{m}{n}
\rput(7.744375,-1.81){\large 17}
\usefont{T1}{ptm}{m}{n}
\rput(8.6857815,-0.49){\large 16}
\usefont{T1}{ptm}{m}{n}
\rput(10.719687,-0.69){\large 15}
\usefont{T1}{ptm}{m}{n}
\rput(9.981563,-2.57){\large 18}
\usefont{T1}{ptm}{b}{n}
\rput(12.4704685,-0.69){\large C2}
\end{pspicture} 
} 
\caption{System Model with three groups of users. The $1^{st}$ group with UE indexes $i = 1,2,3,4,5,6$ (red), $2^{nd}$ group with UE indexes $i = 7,8,9,10,11,12$ (blue), and $3^{rd}$ group with UE indexes $i = 13,14,15,16,17,18$ (green).}
\label{fig:sim:System_Model}
\end{figure}
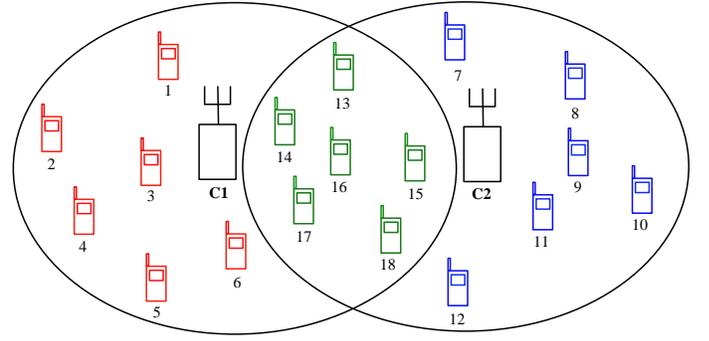

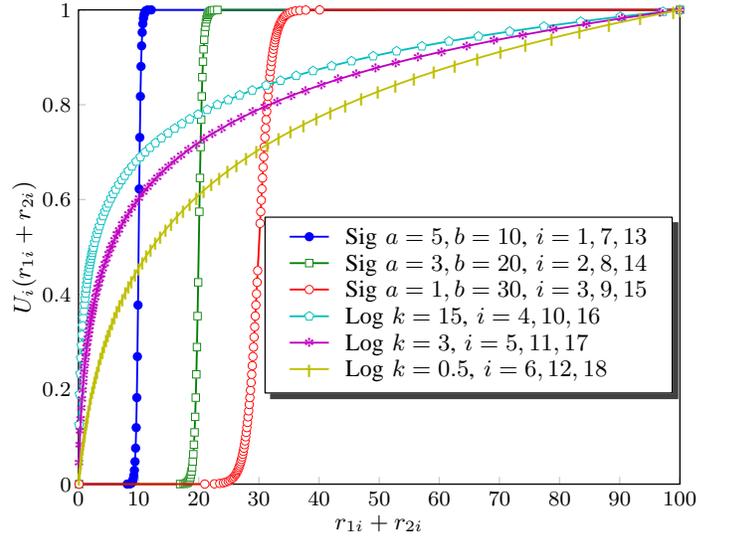
\begin{figure}[tb]
\centering
%
\psset{xunit=0.010000\plotwidth,yunit=0.788710\plotwidth}%
\begin{pspicture}(-11.059908,-0.111111)(102.764977,1.023392)%
\psline[linewidth=\AxesLineWidth,linecolor=GridColor](0.000000,0.000000)(0.000000,0.015215)
\psline[linewidth=\AxesLineWidth,linecolor=GridColor](10.000000,0.000000)(10.000000,0.015215)
\psline[linewidth=\AxesLineWidth,linecolor=GridColor](20.000000,0.000000)(20.000000,0.015215)
\psline[linewidth=\AxesLineWidth,linecolor=GridColor](30.000000,0.000000)(30.000000,0.015215)
\psline[linewidth=\AxesLineWidth,linecolor=GridColor](40.000000,0.000000)(40.000000,0.015215)
\psline[linewidth=\AxesLineWidth,linecolor=GridColor](50.000000,0.000000)(50.000000,0.015215)
\psline[linewidth=\AxesLineWidth,linecolor=GridColor](60.000000,0.000000)(60.000000,0.015215)
\psline[linewidth=\AxesLineWidth,linecolor=GridColor](70.000000,0.000000)(70.000000,0.015215)
\psline[linewidth=\AxesLineWidth,linecolor=GridColor](80.000000,0.000000)(80.000000,0.015215)
\psline[linewidth=\AxesLineWidth,linecolor=GridColor](90.000000,0.000000)(90.000000,0.015215)
\psline[linewidth=\AxesLineWidth,linecolor=GridColor](100.000000,0.000000)(100.000000,0.015215)
\psline[linewidth=\AxesLineWidth,linecolor=GridColor](0.000000,0.000000)(1.200000,0.000000)
\psline[linewidth=\AxesLineWidth,linecolor=GridColor](0.000000,0.200000)(1.200000,0.200000)
\psline[linewidth=\AxesLineWidth,linecolor=GridColor](0.000000,0.400000)(1.200000,0.400000)
\psline[linewidth=\AxesLineWidth,linecolor=GridColor](0.000000,0.600000)(1.200000,0.600000)
\psline[linewidth=\AxesLineWidth,linecolor=GridColor](0.000000,0.800000)(1.200000,0.800000)
\psline[linewidth=\AxesLineWidth,linecolor=GridColor](0.000000,1.000000)(1.200000,1.000000)
{ \footnotesize 
\rput[t](0.000000,-0.015215){$0$}
\rput[t](10.000000,-0.015215){$10$}
\rput[t](20.000000,-0.015215){$20$}
\rput[t](30.000000,-0.015215){$30$}
\rput[t](40.000000,-0.015215){$40$}
\rput[t](50.000000,-0.015215){$50$}
\rput[t](60.000000,-0.015215){$60$}
\rput[t](70.000000,-0.015215){$70$}
\rput[t](80.000000,-0.015215){$80$}
\rput[t](90.000000,-0.015215){$90$}
\rput[t](100.000000,-0.015215){$100$}
\rput[r](-1.200000,0.000000){$0$}
\rput[r](-1.200000,0.200000){$0.2$}
\rput[r](-1.200000,0.400000){$0.4$}
\rput[r](-1.200000,0.600000){$0.6$}
\rput[r](-1.200000,0.800000){$0.8$}
\rput[r](-1.200000,1.000000){$1$}
} 
\psframe[linewidth=\AxesLineWidth,dimen=middle](0.000000,0.000000)(100.000000,1.000000)
{ \small 
\rput[b](50.000000,-0.111111){
\begin{tabular}{c}
$r_{1i} + r_{2i}$\\
\end{tabular}
}
\rput[t]{90}(-11.059908,0.500000){
\begin{tabular}{c}
$U_i(r_{1i} + r_{2i})$\\
\end{tabular}
}
} 
\newrgbcolor{color18.0052}{0  0  1}
\psline[plotstyle=line,linejoin=1,showpoints=false,dotstyle=*,dotsize=\MarkerSize,linestyle=solid,linewidth=\LineWidth,linecolor=color18.0052]
(100.000000,1.000000)(100.000000,1.000000)
\psline[plotstyle=line,linejoin=1,showpoints=true,dotstyle=*,dotsize=\MarkerSize,linestyle=solid,linewidth=\LineWidth,linecolor=color18.0052](0.100000,0.000000)(8.100000,0.000075)(8.400000,0.000335)(8.600000,0.000911)(8.700000,0.001501)
(8.800000,0.002473)(8.900000,0.004070)(9.000000,0.006693)(9.100000,0.010987)(9.200000,0.017986)
(9.300000,0.029312)(9.400000,0.047426)(9.500000,0.075858)(9.600000,0.119203)(9.700000,0.182426)
(9.800000,0.268941)(9.900000,0.377541)(10.100000,0.622459)(10.200000,0.731059)(10.300000,0.817574)
(10.400000,0.880797)(10.500000,0.924142)(10.600000,0.952574)(10.700000,0.970688)(10.800000,0.982014)
(10.900000,0.989013)(11.000000,0.993307)(11.100000,0.995930)(11.200000,0.997527)(11.300000,0.998499)
(11.400000,0.999089)(11.600000,0.999665)(12.100000,0.999972)(100.000000,1.000000)
\newrgbcolor{color19.0048}{0         0.5           0}
\psline[plotstyle=line,linejoin=1,showpoints=false,dotstyle=Bsquare,dotsize=\MarkerSize,linestyle=solid,linewidth=\LineWidth,linecolor=color19.0048]
(100.000000,1.000000)(100.000000,1.000000)
\psline[plotstyle=line,linejoin=1,showpoints=true,dotstyle=Bsquare,dotsize=\MarkerSize,linestyle=solid,linewidth=\LineWidth,linecolor=color19.0048]
(0.100000,0.000000)(16.900000,0.000091)(17.500000,0.000553)(17.800000,0.001359)(18.000000,0.002473)
(18.100000,0.003335)(18.200000,0.004496)(18.300000,0.006060)(18.400000,0.008163)(18.500000,0.010987)
(18.600000,0.014774)(18.700000,0.019840)(18.800000,0.026597)(18.900000,0.035571)(19.000000,0.047426)
(19.100000,0.062973)(19.200000,0.083173)(19.300000,0.109097)(19.400000,0.141851)(19.500000,0.182426)
(19.600000,0.231475)(19.700000,0.289050)(19.800000,0.354344)(19.900000,0.425557)(20.100000,0.574443)
(20.200000,0.645656)(20.300000,0.710950)(20.400000,0.768525)(20.500000,0.817574)(20.600000,0.858149)
(20.700000,0.890903)(20.800000,0.916827)(20.900000,0.937027)(21.000000,0.952574)(21.100000,0.964429)
(21.200000,0.973403)(21.300000,0.980160)(21.400000,0.985226)(21.500000,0.989013)(21.600000,0.991837)
(21.700000,0.993940)(21.800000,0.995504)(21.900000,0.996665)(22.000000,0.997527)(22.200000,0.998641)
(22.500000,0.999447)(23.100000,0.999909)(100.000000,1.000000)
\newrgbcolor{color20.0048}{1  0  0}
\psline[plotstyle=line,linejoin=1,showpoints=false,dotstyle=Bo,dotsize=\MarkerSize,linestyle=solid,linewidth=\LineWidth,linecolor=color20.0048]
(100.000000,1.000000)(100.000000,1.000000)
\psline[plotstyle=line,linejoin=1,showpoints=true,dotstyle=Bo,dotsize=\MarkerSize,linestyle=solid,linewidth=\LineWidth,linecolor=color20.0048]
(0.100000,0.000000)(21.000000,0.000123)(22.600000,0.000611)(23.500000,0.001501)(24.100000,0.002732)
(24.500000,0.004070)(24.900000,0.006060)(25.200000,0.008163)(25.500000,0.010987)(25.700000,0.013387)
(25.900000,0.016302)(26.100000,0.019840)(26.200000,0.021881)(26.300000,0.024127)(26.400000,0.026597)
(26.500000,0.029312)(26.600000,0.032295)(26.700000,0.035571)(26.800000,0.039166)(26.900000,0.043107)
(27.000000,0.047426)(27.100000,0.052154)(27.200000,0.057324)(27.300000,0.062973)(27.400000,0.069138)
(27.500000,0.075858)(27.600000,0.083173)(27.700000,0.091123)(27.800000,0.099750)(27.900000,0.109097)
(28.000000,0.119203)(28.100000,0.130108)(28.200000,0.141851)(28.300000,0.154465)(28.400000,0.167982)
(28.500000,0.182426)(28.600000,0.197816)(28.700000,0.214165)(28.800000,0.231475)(28.900000,0.249740)
(29.000000,0.268941)(29.100000,0.289050)(29.200000,0.310026)(29.300000,0.331812)(29.400000,0.354344)
(29.500000,0.377541)(29.600000,0.401312)(29.700000,0.425557)(29.800000,0.450166)(30.200000,0.549834)
(30.300000,0.574443)(30.400000,0.598688)(30.500000,0.622459)(30.600000,0.645656)(30.700000,0.668188)
(30.800000,0.689974)(30.900000,0.710950)(31.000000,0.731059)(31.100000,0.750260)(31.200000,0.768525)
(31.300000,0.785835)(31.400000,0.802184)(31.500000,0.817574)(31.600000,0.832018)(31.700000,0.845535)
(31.800000,0.858149)(31.900000,0.869892)(32.000000,0.880797)(32.100000,0.890903)(32.200000,0.900250)
(32.300000,0.908877)(32.400000,0.916827)(32.500000,0.924142)(32.600000,0.930862)(32.700000,0.937027)
(32.800000,0.942676)(32.900000,0.947846)(33.000000,0.952574)(33.100000,0.956893)(33.200000,0.960834)
(33.300000,0.964429)(33.400000,0.967705)(33.500000,0.970688)(33.600000,0.973403)(33.700000,0.975873)
(33.800000,0.978119)(34.000000,0.982014)(34.200000,0.985226)(34.400000,0.987872)(34.600000,0.990048)
(34.900000,0.992608)(35.200000,0.994514)(35.600000,0.996316)(36.100000,0.997762)(36.800000,0.998887)
(37.800000,0.999590)(40.100000,0.999959)(100.000000,1.000000)
\newrgbcolor{color21.0048}{0        0.75        0.75}
\psline[plotstyle=line,linejoin=1,showpoints=false,dotstyle=Bpentagon,dotsize=\MarkerSize,linestyle=solid,linewidth=\LineWidth,linecolor=color21.0048]
(100.000000,1.000000)(100.000000,1.000000)
\psline[plotstyle=line,linejoin=1,showpoints=true,dotstyle=Bpentagon,dotsize=\MarkerSize,linestyle=solid,linewidth=\LineWidth,linecolor=color21.0048]
(0.100000,0.125281)(0.200000,0.189543)(0.300000,0.233084)(0.400000,0.266057)(0.500000,0.292603)
(0.600000,0.314824)(0.700000,0.333933)(0.800000,0.350696)(0.900000,0.365626)(1.000000,0.379086)
(1.100000,0.391338)(1.200000,0.402582)(1.300000,0.412971)(1.400000,0.422627)(1.500000,0.431645)
(1.600000,0.440105)(1.700000,0.448072)(1.800000,0.455600)(1.900000,0.462735)(2.000000,0.469516)
(2.100000,0.475977)(2.200000,0.482146)(2.300000,0.488049)(2.400000,0.493707)(2.500000,0.499141)
(2.700000,0.509400)(2.900000,0.518943)(3.100000,0.527863)(3.300000,0.536237)(3.500000,0.544127)
(3.800000,0.555169)(4.100000,0.565386)(4.400000,0.574892)(4.700000,0.583780)(5.000000,0.592125)
(5.400000,0.602514)(5.800000,0.612169)(6.200000,0.621187)(6.700000,0.631683)(7.200000,0.641430)
(7.700000,0.650528)(8.300000,0.660703)(8.900000,0.670172)(9.600000,0.680450)(10.300000,0.690009)
(11.100000,0.700173)(11.900000,0.709633)(12.800000,0.719548)(13.800000,0.729781)(14.900000,0.740219)
(16.000000,0.749915)(17.200000,0.759764)(18.500000,0.769689)(19.900000,0.779628)(21.400000,0.789533)
(23.100000,0.799953)(24.900000,0.810184)(26.800000,0.820212)(28.900000,0.830502)(31.200000,0.840949)
(33.600000,0.851060)(36.200000,0.861231)(39.000000,0.871400)(42.100000,0.881840)(45.400000,0.892143)
(49.000000,0.902561)(52.900000,0.913018)(57.100000,0.923452)(61.600000,0.933812)(66.500000,0.944266)
(71.700000,0.954550)(77.400000,0.965000)(83.500000,0.975363)(90.100000,0.985756)(97.200000,0.996120)
(100.000000,1.000000)
\newrgbcolor{color22.0046}{0.75           0        0.75}
\psline[plotstyle=line,linejoin=1,showpoints=false,dotstyle=Basterisk,dotsize=\MarkerSize,linestyle=solid,linewidth=\LineWidth,linecolor=color22.0046]
(100.000000,1.000000)(100.000000,1.000000)
\psline[plotstyle=line,linejoin=1,showpoints=true,dotstyle=Basterisk,dotsize=\MarkerSize,linestyle=solid,linewidth=\LineWidth,linecolor=color22.0046]
(0.100000,0.045971)(0.200000,0.082354)(0.300000,0.112466)(0.400000,0.138154)(0.500000,0.160552)
(0.600000,0.180410)(0.700000,0.198244)(0.800000,0.214430)(0.900000,0.229246)(1.000000,0.242907)
(1.100000,0.255579)(1.200000,0.267396)(1.300000,0.278466)(1.400000,0.288878)(1.500000,0.298706)
(1.600000,0.308012)(1.700000,0.316848)(1.800000,0.325261)(1.900000,0.333287)(2.000000,0.340962)
(2.100000,0.348315)(2.200000,0.355372)(2.300000,0.362156)(2.400000,0.368686)(2.500000,0.374982)
(2.600000,0.381060)(2.800000,0.392617)(3.000000,0.403459)(3.200000,0.413669)(3.400000,0.423316)
(3.600000,0.432460)(3.800000,0.441151)(4.100000,0.453428)(4.400000,0.464901)(4.700000,0.475669)
(5.000000,0.485813)(5.300000,0.495402)(5.600000,0.504493)(6.000000,0.515925)(6.400000,0.526656)
(6.800000,0.536767)(7.300000,0.548638)(7.800000,0.559755)(8.300000,0.570209)(8.900000,0.581981)
(9.500000,0.593013)(10.100000,0.603391)(10.800000,0.614769)(11.500000,0.625454)(12.300000,0.636916)
(13.100000,0.647675)(14.000000,0.659038)(15.000000,0.670855)(16.000000,0.681925)(17.100000,0.693345)
(18.300000,0.705009)(19.600000,0.716826)(20.900000,0.727896)(22.300000,0.739084)(23.800000,0.750328)
(25.400000,0.761576)(27.100000,0.772785)(29.000000,0.784519)(31.000000,0.796076)(33.100000,0.807443)
(35.400000,0.819100)(37.800000,0.830491)(40.400000,0.842048)(43.200000,0.853696)(46.200000,0.865373)
(49.400000,0.877027)(52.800000,0.888614)(56.500000,0.900409)(60.400000,0.912039)(64.600000,0.923755)
(69.100000,0.935496)(73.900000,0.947209)(79.000000,0.958851)(84.500000,0.970596)(90.400000,0.982378)
(96.700000,0.994140)(100.000000,1.000000)
\newrgbcolor{color23.0046}{0.75        0.75           0}
\psline[plotstyle=line,linejoin=1,showpoints=false,dotstyle=B|,dotsize=\MarkerSize,linestyle=solid,linewidth=\LineWidth,linecolor=color23.0046]
(100.000000,1.000000)(100.000000,1.000000)
\psline[plotstyle=line,linejoin=1,showpoints=true,dotstyle=B|,dotsize=\MarkerSize,linestyle=solid,linewidth=\LineWidth,linecolor=color23.0046]
(0.100000,0.012409)(0.200000,0.024241)(0.300000,0.035546)(0.400000,0.046371)(0.500000,0.056753)
(0.600000,0.066728)(0.700000,0.076327)(0.800000,0.085577)(0.900000,0.094502)(1.000000,0.103124)
(1.100000,0.111463)(1.200000,0.119538)(1.300000,0.127365)(1.400000,0.134957)(1.500000,0.142330)
(1.700000,0.156463)(1.900000,0.169852)(2.100000,0.182572)(2.300000,0.194685)(2.500000,0.206248)
(2.700000,0.217308)(2.900000,0.227906)(3.100000,0.238081)(3.400000,0.252618)(3.700000,0.266370)
(4.000000,0.279415)(4.300000,0.291824)(4.600000,0.303656)(4.900000,0.314962)(5.200000,0.325786)
(5.600000,0.339537)(6.000000,0.352583)(6.400000,0.364992)(6.800000,0.376824)(7.300000,0.390879)
(7.800000,0.404198)(8.300000,0.416854)(8.800000,0.428910)(9.400000,0.442661)(10.000000,0.455707)
(10.600000,0.468116)(11.300000,0.481867)(12.000000,0.494913)(12.800000,0.509046)(13.600000,0.522435)
(14.500000,0.536701)(15.400000,0.550208)(16.400000,0.564421)(17.400000,0.577881)(18.500000,0.591908)
(19.600000,0.605201)(20.800000,0.618953)(22.100000,0.633056)(23.400000,0.646418)(24.800000,0.660064)
(26.300000,0.673915)(27.900000,0.687902)(29.600000,0.701967)(31.400000,0.716056)(33.300000,0.730128)
(35.300000,0.744144)(37.400000,0.758075)(39.600000,0.771894)(41.900000,0.785581)(44.400000,0.799667)
(47.000000,0.813534)(49.800000,0.827667)(52.700000,0.841522)(55.800000,0.855542)(59.100000,0.869663)
(62.600000,0.883830)(66.300000,0.897996)(70.200000,0.912119)(74.300000,0.926167)(78.700000,0.940426)
(83.300000,0.954525)(88.200000,0.968731)(93.400000,0.982986)(98.900000,0.997242)(100.000000,1.000000)
{ \small 
\rput(65.349462,0.371726){%
\psshadowbox[framesep=0pt,linewidth=\AxesLineWidth]{\psframebox*{\begin{tabular}{l}
\Rnode{a1}{\hspace*{0.0ex}} \hspace*{0.3cm} \Rnode{a2}{~~Sig $a=5, b=10$, $i=1,7,13$} \\
\Rnode{a3}{\hspace*{0.0ex}} \hspace*{0.3cm} \Rnode{a4}{~~Sig $a=3, b=20$, $i=2,8,14$} \\
\Rnode{a5}{\hspace*{0.0ex}} \hspace*{0.3cm} \Rnode{a6}{~~Sig $a=1, b=30$, $i=3,9,15$} \\
\Rnode{a7}{\hspace*{0.0ex}} \hspace*{0.3cm} \Rnode{a8}{~~Log $k=15$, $i=4,10,16$} \\
\Rnode{a9}{\hspace*{0.0ex}} \hspace*{0.3cm} \Rnode{a10}{~~Log $k=3$, $i=5,11,17$} \\
\Rnode{a11}{\hspace*{0.0ex}} \hspace*{0.3cm} \Rnode{a12}{~~Log $k=0.5$, $i=6,12,18$} \\
\end{tabular}}
\ncline[linestyle=solid,linewidth=\LineWidth,linecolor=color18.0052]{a1}{a2}
\ncput{\psdot[dotstyle=*,dotsize=\MarkerSize,linecolor=color18.0052]}
\ncline[linestyle=solid,linewidth=\LineWidth,linecolor=color19.0048]{a3}{a4}
\ncput{\psdot[dotstyle=Bsquare,dotsize=\MarkerSize,linecolor=color19.0048]}
\ncline[linestyle=solid,linewidth=\LineWidth,linecolor=color20.0048]{a5}{a6}
\ncput{\psdot[dotstyle=Bo,dotsize=\MarkerSize,linecolor=color20.0048]}
\ncline[linestyle=solid,linewidth=\LineWidth,linecolor=color21.0048]{a7}{a8}
\ncput{\psdot[dotstyle=Bpentagon,dotsize=\MarkerSize,linecolor=color21.0048]}
\ncline[linestyle=solid,linewidth=\LineWidth,linecolor=color22.0046]{a9}{a10}
\ncput{\psdot[dotstyle=Basterisk,dotsize=\MarkerSize,linecolor=color22.0046]}
\ncline[linestyle=solid,linewidth=\LineWidth,linecolor=color23.0046]{a11}{a12}
\ncput{\psdot[dotstyle=B|,dotsize=\MarkerSize,linecolor=color23.0046]}
}%
}%
} 
\end{pspicture}%

\caption{The users utility functions $U_i(r_{1i}+r_{2i})$ used in the simulation (three sigmoidal-like functions and three logarithmic functions).}
\label{fig:sim:Utilities}
\end{figure}

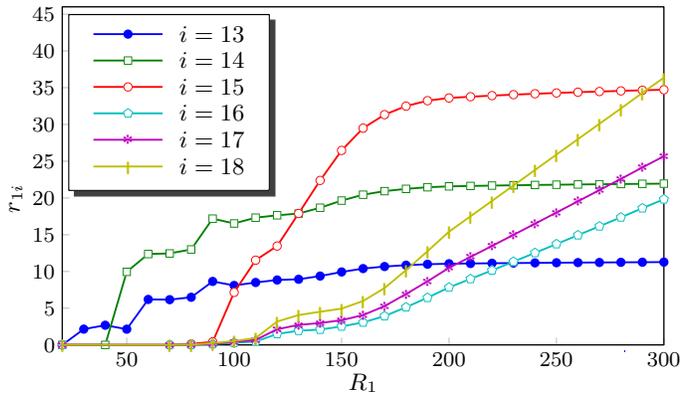
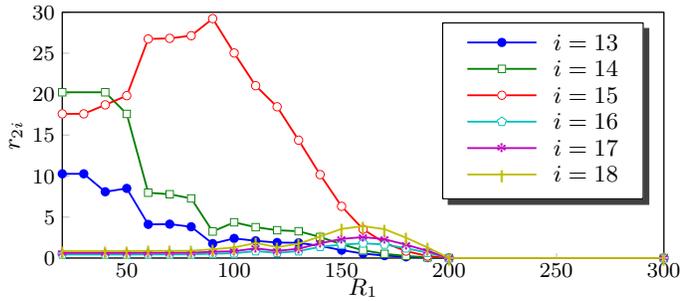
\begin{figure}[tb]
\centering
\subfigure[The allocated rates $r_{1i}$ from the $1^{st}$ carrier eNodeB to the $3^{rd}$ group of users.]{
\label{fig:sim:rates31}
%
\psset{xunit=0.003571\plotwidth,yunit=0.012218\plotwidth}%
\begin{pspicture}(-8.387097,-7.169811)(307.741935,46.509434)%


\psline[linewidth=\AxesLineWidth,linecolor=GridColor](50.000000,0.000000)(50.000000,0.982178)
\psline[linewidth=\AxesLineWidth,linecolor=GridColor](100.000000,0.000000)(100.000000,0.982178)
\psline[linewidth=\AxesLineWidth,linecolor=GridColor](150.000000,0.000000)(150.000000,0.982178)
\psline[linewidth=\AxesLineWidth,linecolor=GridColor](200.000000,0.000000)(200.000000,0.982178)
\psline[linewidth=\AxesLineWidth,linecolor=GridColor](250.000000,0.000000)(250.000000,0.982178)
\psline[linewidth=\AxesLineWidth,linecolor=GridColor](300.000000,0.000000)(300.000000,0.982178)
\psline[linewidth=\AxesLineWidth,linecolor=GridColor](20.000000,0.000000)(23.360000,0.000000)
\psline[linewidth=\AxesLineWidth,linecolor=GridColor](20.000000,5.000000)(23.360000,5.000000)
\psline[linewidth=\AxesLineWidth,linecolor=GridColor](20.000000,10.000000)(23.360000,10.000000)
\psline[linewidth=\AxesLineWidth,linecolor=GridColor](20.000000,15.000000)(23.360000,15.000000)
\psline[linewidth=\AxesLineWidth,linecolor=GridColor](20.000000,20.000000)(23.360000,20.000000)
\psline[linewidth=\AxesLineWidth,linecolor=GridColor](20.000000,25.000000)(23.360000,25.000000)
\psline[linewidth=\AxesLineWidth,linecolor=GridColor](20.000000,30.000000)(23.360000,30.000000)
\psline[linewidth=\AxesLineWidth,linecolor=GridColor](20.000000,35.000000)(23.360000,35.000000)
\psline[linewidth=\AxesLineWidth,linecolor=GridColor](20.000000,40.000000)(23.360000,40.000000)
\psline[linewidth=\AxesLineWidth,linecolor=GridColor](20.000000,45.000000)(23.360000,45.000000)

{ \footnotesize 
\rput[t](50.000000,-0.982178){$50$}
\rput[t](100.000000,-0.982178){$100$}
\rput[t](150.000000,-0.982178){$150$}
\rput[t](200.000000,-0.982178){$200$}
\rput[t](250.000000,-0.982178){$250$}
\rput[t](300.000000,-0.982178){$300$}
\rput[r](16.640000,0.000000){$0$}
\rput[r](16.640000,5.000000){$5$}
\rput[r](16.640000,10.000000){$10$}
\rput[r](16.640000,15.000000){$15$}
\rput[r](16.640000,20.000000){$20$}
\rput[r](16.640000,25.000000){$25$}
\rput[r](16.640000,30.000000){$30$}
\rput[r](16.640000,35.000000){$35$}
\rput[r](16.640000,40.000000){$40$}
\rput[r](16.640000,45.000000){$45$}
} 

\psframe[linewidth=\AxesLineWidth,dimen=middle](20.000000,0.000000)(300.000000,46.000000)

{ \small 
\rput[b](160.000000,-7.169811){
\begin{tabular}{c}
$R_1$\\
\end{tabular}
}

\rput[t]{90}(-8.387097,20.000000){
\begin{tabular}{c}
$r_{1i}$\\
\end{tabular}
}
} 

\newrgbcolor{color1791.0042}{0  0  1}
\psline[plotstyle=line,linejoin=1,showpoints=false,dotstyle=*,dotsize=\MarkerSize,linestyle=solid,linewidth=\LineWidth,linecolor=color1791.0042]
(300.000000,11.267571)(300.000000,11.267571)
\psline[plotstyle=line,linejoin=1,showpoints=true,dotstyle=*,dotsize=\MarkerSize,linestyle=solid,linewidth=\LineWidth,linecolor=color1791.0042]
(20.000000,0.000000)(30.000000,2.147368)(40.000000,2.676944)(50.000000,2.139593)(60.000000,6.190338)
(70.000000,6.155558)(80.000000,6.482891)(90.000000,8.637891)(100.000000,8.087764)(110.000000,8.495836)
(120.000000,8.832809)(130.000000,8.930901)(140.000000,9.366922)(150.000000,9.926121)(160.000000,10.384524)
(170.000000,10.666865)(180.000000,10.845026)(190.000000,10.975993)(200.000000,11.046529)(210.000000,11.079651)
(220.000000,11.108911)(230.000000,11.135279)(240.000000,11.159120)(250.000000,11.180863)(260.000000,11.200836)
(270.000000,11.219273)(280.000000,11.236368)(290.000000,11.252635)(300.000000,11.267571)

\newrgbcolor{color1792.0037}{0         0.5           0}
\psline[plotstyle=line,linejoin=1,showpoints=false,dotstyle=Bsquare,dotsize=\MarkerSize,linestyle=solid,linewidth=\LineWidth,linecolor=color1792.0037]
(300.000000,21.941951)(300.000000,21.941951)
\psline[plotstyle=line,linejoin=1,showpoints=true,dotstyle=Bsquare,dotsize=\MarkerSize,linestyle=solid,linewidth=\LineWidth,linecolor=color1792.0037]
(20.000000,0.000000)(40.000000,0.000000)(50.000000,9.942648)(60.000000,12.365699)(70.000000,12.447476)
(80.000000,12.989398)(90.000000,17.181115)(100.000000,16.525012)(110.000000,17.314609)(120.000000,17.635547)
(130.000000,17.871769)(140.000000,18.664396)(150.000000,19.637127)(160.000000,20.435227)(170.000000,20.923228)
(180.000000,21.231801)(190.000000,21.454332)(200.000000,21.572751)(210.000000,21.628136)(220.000000,21.677040)
(230.000000,21.721094)(240.000000,21.760915)(250.000000,21.797223)(260.000000,21.830569)(270.000000,21.861345)
(280.000000,21.889878)(290.000000,21.917027)(300.000000,21.941951)

\newrgbcolor{color1793.0037}{1  0  0}
\psline[plotstyle=line,linejoin=1,showpoints=false,dotstyle=Bo,dotsize=\MarkerSize,linestyle=solid,linewidth=\LineWidth,linecolor=color1793.0037]
(300.000000,34.721322)(300.000000,34.721322)
\psline[plotstyle=line,linejoin=1,showpoints=true,dotstyle=Bo,dotsize=\MarkerSize,linestyle=solid,linewidth=\LineWidth,linecolor=color1793.0037]
(20.000000,0.000000)(70.000000,0.000000)(80.000000,0.130594)(90.000000,0.415519)(100.000000,7.149753)
(110.000000,11.529334)(120.000000,13.453544)(130.000000,17.887719)(140.000000,22.381571)(150.000000,26.462350)
(160.000000,29.491088)(170.000000,31.330654)(180.000000,32.491730)(190.000000,33.232825)(200.000000,33.601617)
(210.000000,33.770554)(220.000000,33.919359)(230.000000,34.053159)(240.000000,34.173924)(250.000000,34.283905)
(260.000000,34.384820)(270.000000,34.477881)(280.000000,34.564099)(290.000000,34.646089)(300.000000,34.721322)

\newrgbcolor{color1794.0037}{0        0.75        0.75}
\psline[plotstyle=line,linejoin=1,showpoints=false,dotstyle=Bpentagon,dotsize=\MarkerSize,linestyle=solid,linewidth=\LineWidth,linecolor=color1794.0037]
(300.000000,19.820062)(300.000000,19.820062)
\psline[plotstyle=line,linejoin=1,showpoints=true,dotstyle=Bpentagon,dotsize=\MarkerSize,linestyle=solid,linewidth=\LineWidth,linecolor=color1794.0037]
(20.000000,0.000000)(70.000000,0.000000)(80.000000,0.014995)(90.000000,0.110033)(100.000000,0.281993)
(110.000000,0.456036)(120.000000,1.491031)(130.000000,1.929286)(140.000000,2.079900)(150.000000,2.514073)
(160.000000,3.055001)(170.000000,3.919724)(180.000000,5.131704)(190.000000,6.420987)(200.000000,7.822190)
(210.000000,8.976774)(220.000000,10.142902)(230.000000,11.327491)(240.000000,12.521140)(250.000000,13.722528)
(260.000000,14.930571)(270.000000,16.142621)(280.000000,17.356709)(290.000000,18.599094)(300.000000,19.820062)

\newrgbcolor{color1795.0035}{0.75           0        0.75}
\psline[plotstyle=line,linejoin=1,showpoints=false,dotstyle=Basterisk,dotsize=\MarkerSize,linestyle=solid,linewidth=\LineWidth,linecolor=color1795.0035]
(300.000000,25.674544)(300.000000,25.674544)
\psline[plotstyle=line,linejoin=1,showpoints=true,dotstyle=Basterisk,dotsize=\MarkerSize,linestyle=solid,linewidth=\LineWidth,linecolor=color1795.0035]
(20.000000,0.000000)(70.000000,0.000000)(80.000000,0.021952)(90.000000,0.158872)(100.000000,0.414949)
(110.000000,0.679019)(120.000000,2.127807)(130.000000,2.725629)(140.000000,2.973668)(150.000000,3.366219)
(160.000000,4.056499)(170.000000,5.282275)(180.000000,6.921985)(190.000000,8.652108)(200.000000,10.487510)
(210.000000,11.971680)(220.000000,13.463892)(230.000000,14.973746)(240.000000,16.489846)(250.000000,18.011047)
(260.000000,19.536417)(270.000000,21.062992)(280.000000,22.588628)(290.000000,24.146510)(300.000000,25.674544)

\newrgbcolor{color1796.0035}{0.75        0.75           0}
\psline[plotstyle=line,linejoin=1,showpoints=false,dotstyle=B|,dotsize=\MarkerSize,linestyle=solid,linewidth=\LineWidth,linecolor=color1796.0035]
(300.000000,36.361580)(300.000000,36.361580)
\psline[plotstyle=line,linejoin=1,showpoints=true,dotstyle=B|,dotsize=\MarkerSize,linestyle=solid,linewidth=\LineWidth,linecolor=color1796.0035]
(20.000000,0.000000)(70.000000,0.000000)(80.000000,0.033476)(90.000000,0.245870)(100.000000,0.519565)
(110.000000,0.936731)(120.000000,3.180306)(130.000000,4.060566)(140.000000,4.511487)(150.000000,4.932335)
(160.000000,5.972770)(170.000000,7.736844)(180.000000,10.171549)(190.000000,12.747392)(200.000000,15.404947)
(210.000000,17.499205)(220.000000,19.591116)(230.000000,21.695656)(240.000000,23.798157)(250.000000,25.898134)
(260.000000,27.995220)(270.000000,30.086134)(280.000000,32.168644)(290.000000,34.288454)(300.000000,36.361580)

{ \small 
\rput[tr](119.280000,45.035645){%
\psshadowbox[framesep=0pt,linewidth=\AxesLineWidth]{\psframebox*{\begin{tabular}{l}
\Rnode{a1}{\hspace*{0.0ex}} \hspace*{0.7cm} \Rnode{a2}{~~$ i = 13$} \\
\Rnode{a3}{\hspace*{0.0ex}} \hspace*{0.7cm} \Rnode{a4}{~~$ i = 14$} \\
\Rnode{a5}{\hspace*{0.0ex}} \hspace*{0.7cm} \Rnode{a6}{~~$ i = 15$} \\
\Rnode{a7}{\hspace*{0.0ex}} \hspace*{0.7cm} \Rnode{a8}{~~$ i = 16$} \\
\Rnode{a9}{\hspace*{0.0ex}} \hspace*{0.7cm} \Rnode{a10}{~~$i = 17$} \\
\Rnode{a11}{\hspace*{0.0ex}} \hspace*{0.7cm} \Rnode{a12}{~~$ i = 18$} \\
\end{tabular}}
\ncline[linestyle=solid,linewidth=\LineWidth,linecolor=color1791.0042]{a1}{a2}
\ncput{\psdot[dotstyle=*,dotsize=\MarkerSize,linecolor=color1791.0042]}
\ncline[linestyle=solid,linewidth=\LineWidth,linecolor=color1792.0037]{a3}{a4}
\ncput{\psdot[dotstyle=Bsquare,dotsize=\MarkerSize,linecolor=color1792.0037]}
\ncline[linestyle=solid,linewidth=\LineWidth,linecolor=color1793.0037]{a5}{a6}
\ncput{\psdot[dotstyle=Bo,dotsize=\MarkerSize,linecolor=color1793.0037]}
\ncline[linestyle=solid,linewidth=\LineWidth,linecolor=color1794.0037]{a7}{a8}
\ncput{\psdot[dotstyle=Bpentagon,dotsize=\MarkerSize,linecolor=color1794.0037]}
\ncline[linestyle=solid,linewidth=\LineWidth,linecolor=color1795.0035]{a9}{a10}
\ncput{\psdot[dotstyle=Basterisk,dotsize=\MarkerSize,linecolor=color1795.0035]}
\ncline[linestyle=solid,linewidth=\LineWidth,linecolor=color1796.0035]{a11}{a12}
\ncput{\psdot[dotstyle=B|,dotsize=\MarkerSize,linecolor=color1796.0035]}

}%
}%
} 

\end{pspicture}%
}
\subfigure[The allocated rates $r_{2i}$ from the $2^{nd}$ carrier eNodeB to the $3^{rd}$ group of users.]{
\label{fig:sim:rates32}
%
\psset{xunit=0.003571\plotwidth,yunit=0.013624\plotwidth}%
\begin{pspicture}(-8.387097,-4.453125)(307.741935,30.937500)%


\psline[linewidth=\AxesLineWidth,linecolor=GridColor](50.000000,0.000000)(50.000000,0.611507)
\psline[linewidth=\AxesLineWidth,linecolor=GridColor](100.000000,0.000000)(100.000000,0.611507)
\psline[linewidth=\AxesLineWidth,linecolor=GridColor](150.000000,0.000000)(150.000000,0.611507)
\psline[linewidth=\AxesLineWidth,linecolor=GridColor](200.000000,0.000000)(200.000000,0.611507)
\psline[linewidth=\AxesLineWidth,linecolor=GridColor](250.000000,0.000000)(250.000000,0.611507)
\psline[linewidth=\AxesLineWidth,linecolor=GridColor](300.000000,0.000000)(300.000000,0.611507)
\psline[linewidth=\AxesLineWidth,linecolor=GridColor](20.000000,0.000000)(23.360000,0.000000)
\psline[linewidth=\AxesLineWidth,linecolor=GridColor](20.000000,5.000000)(23.360000,5.000000)
\psline[linewidth=\AxesLineWidth,linecolor=GridColor](20.000000,10.000000)(23.360000,10.000000)
\psline[linewidth=\AxesLineWidth,linecolor=GridColor](20.000000,15.000000)(23.360000,15.000000)
\psline[linewidth=\AxesLineWidth,linecolor=GridColor](20.000000,20.000000)(23.360000,20.000000)
\psline[linewidth=\AxesLineWidth,linecolor=GridColor](20.000000,25.000000)(23.360000,25.000000)
\psline[linewidth=\AxesLineWidth,linecolor=GridColor](20.000000,30.000000)(23.360000,30.000000)

{ \footnotesize 
\rput[t](50.000000,-0.611507){$50$}
\rput[t](100.000000,-0.611507){$100$}
\rput[t](150.000000,-0.611507){$150$}
\rput[t](200.000000,-0.611507){$200$}
\rput[t](250.000000,-0.611507){$250$}
\rput[t](300.000000,-0.611507){$300$}
\rput[r](16.640000,0.000000){$0$}
\rput[r](16.640000,5.000000){$5$}
\rput[r](16.640000,10.000000){$10$}
\rput[r](16.640000,15.000000){$15$}
\rput[r](16.640000,20.000000){$20$}
\rput[r](16.640000,25.000000){$25$}
\rput[r](16.640000,30.000000){$30$}
} 

\psframe[linewidth=\AxesLineWidth,dimen=middle](20.000000,0.000000)(300.000000,30.000000)

{ \small 
\rput[b](160.000000,-5.453125){
\begin{tabular}{c}
$R_1$\\
\end{tabular}
}

\rput[t]{90}(-8.387097,15.000000){
\begin{tabular}{c}
$r_{2i}$\\
\end{tabular}
}
} 

\newrgbcolor{color1791.0042}{0  0  1}
\psline[plotstyle=line,linejoin=1,showpoints=false,dotstyle=*,dotsize=\MarkerSize,linestyle=solid,linewidth=\LineWidth,linecolor=color1791.0042]
(300.000000,0.000000)(300.000000,0.000000)
\psline[plotstyle=line,linejoin=1,showpoints=true,dotstyle=*,dotsize=\MarkerSize,linestyle=solid,linewidth=\LineWidth,linecolor=color1791.0042]
(20.000000,10.277265)(30.000000,10.277252)(40.000000,8.086813)(50.000000,8.499581)(60.000000,4.110114)
(70.000000,4.137871)(80.000000,3.819327)(90.000000,1.761198)(100.000000,2.397155)(110.000000,2.100866)
(120.000000,1.891610)(130.000000,1.860199)(140.000000,1.478954)(150.000000,0.965489)(160.000000,0.545517)
(170.000000,0.297572)(180.000000,0.150176)(190.000000,0.046347)(200.000000,0.000000)(300.000000,0.000000)

\newrgbcolor{color1792.0037}{0         0.5           0}
\psline[plotstyle=line,linejoin=1,showpoints=false,dotstyle=Bsquare,dotsize=\MarkerSize,linestyle=solid,linewidth=\LineWidth,linecolor=color1792.0037]
(300.000000,0.000000)(300.000000,0.000000)
\psline[plotstyle=line,linejoin=1,showpoints=true,dotstyle=Bsquare,dotsize=\MarkerSize,linestyle=solid,linewidth=\LineWidth,linecolor=color1792.0037]
(20.000000,20.231062)(40.000000,20.231051)(50.000000,17.590763)(60.000000,7.975424)(70.000000,7.786617)
(80.000000,7.276157)(90.000000,3.240136)(100.000000,4.362707)(110.000000,3.765473)(120.000000,3.395550)
(130.000000,3.272174)(140.000000,2.571871)(150.000000,1.670437)(160.000000,0.942435)(170.000000,0.512099)
(180.000000,0.255056)(190.000000,0.077951)(200.000000,0.000000)(300.000000,0.000000)

\newrgbcolor{color1793.0037}{1  0  0}
\psline[plotstyle=line,linejoin=1,showpoints=false,dotstyle=Bo,dotsize=\MarkerSize,linestyle=solid,linewidth=\LineWidth,linecolor=color1793.0037]
(300.000000,0.000000)(300.000000,0.000000)
\psline[plotstyle=line,linejoin=1,showpoints=true,dotstyle=Bo,dotsize=\MarkerSize,linestyle=solid,linewidth=\LineWidth,linecolor=color1793.0037]
(20.000000,17.598795)(30.000000,17.597874)(40.000000,18.693171)(50.000000,19.816013)(60.000000,26.737949)
(70.000000,26.812910)(80.000000,27.140849)(90.000000,29.222771)(100.000000,25.025277)(110.000000,21.029939)
(120.000000,18.461664)(130.000000,14.384617)(140.000000,10.182376)(150.000000,6.309309)(160.000000,3.510692)
(170.000000,1.854074)(180.000000,0.846848)(190.000000,0.245039)(200.000000,0.000000)(300.000000,0.000000)

\newrgbcolor{color1794.0037}{0        0.75        0.75}
\psline[plotstyle=line,linejoin=1,showpoints=false,dotstyle=Bpentagon,dotsize=\MarkerSize,linestyle=solid,linewidth=\LineWidth,linecolor=color1794.0037]
(300.000000,0.000000)(300.000000,0.000000)
\psline[plotstyle=line,linejoin=1,showpoints=true,dotstyle=Bpentagon,dotsize=\MarkerSize,linestyle=solid,linewidth=\LineWidth,linecolor=color1794.0037]
(20.000000,0.430868)(40.000000,0.430861)(50.000000,0.429420)(60.000000,0.443534)(70.000000,0.444517)
(80.000000,0.450794)(90.000000,0.530787)(100.000000,0.583687)(110.000000,0.849805)(120.000000,0.640269)
(130.000000,0.841757)(140.000000,1.371437)(150.000000,1.631446)(160.000000,1.790528)(170.000000,1.656795)
(180.000000,1.195733)(190.000000,0.655834)(200.000000,0.000000)(300.000000,0.000000)

\newrgbcolor{color1795.0035}{0.75           0        0.75}
\psline[plotstyle=line,linejoin=1,showpoints=false,dotstyle=Basterisk,dotsize=\MarkerSize,linestyle=solid,linewidth=\LineWidth,linecolor=color1795.0035]
(300.000000,0.000000)(300.000000,0.000000)
\psline[plotstyle=line,linejoin=1,showpoints=true,dotstyle=Basterisk,dotsize=\MarkerSize,linestyle=solid,linewidth=\LineWidth,linecolor=color1795.0035]
(20.000000,0.619143)(40.000000,0.619132)(50.000000,0.617016)(60.000000,0.637731)(70.000000,0.639171)
(80.000000,0.648391)(90.000000,0.765923)(100.000000,0.832105)(110.000000,1.187597)(120.000000,0.874583)
(130.000000,1.135185)(140.000000,1.801122)(150.000000,2.331690)(160.000000,2.558943)(170.000000,2.293998)
(180.000000,1.631964)(190.000000,0.873114)(200.000000,0.000000)(300.000000,0.000000)

\newrgbcolor{color1796.0035}{0.75        0.75           0}
\psline[plotstyle=line,linejoin=1,showpoints=false,dotstyle=B|,dotsize=\MarkerSize,linestyle=solid,linewidth=\LineWidth,linecolor=color1796.0035]
(300.000000,0.000000)(300.000000,0.000000)
\psline[plotstyle=line,linejoin=1,showpoints=true,dotstyle=B|,dotsize=\MarkerSize,linestyle=solid,linewidth=\LineWidth,linecolor=color1796.0035]
(20.000000,0.843079)(40.000000,0.843062)(50.000000,0.839854)(60.000000,0.871298)(70.000000,0.873487)
(80.000000,0.887467)(90.000000,1.066464)(100.000000,1.292899)(110.000000,1.836118)(120.000000,1.329752)
(130.000000,1.745510)(140.000000,2.635234)(150.000000,3.559970)(160.000000,3.895033)(170.000000,3.524354)
(180.000000,2.480649)(190.000000,1.291363)(200.000000,0.000000)(300.000000,0.000000)

{ \small 
\rput[tr](293.280000,28.776987){%
\psshadowbox[framesep=0pt,linewidth=\AxesLineWidth]{\psframebox*{\begin{tabular}{l}
\Rnode{a1}{\hspace*{0.0ex}} \hspace*{0.7cm} \Rnode{a2}{~~$ i = 13$} \\
\Rnode{a3}{\hspace*{0.0ex}} \hspace*{0.7cm} \Rnode{a4}{~~$ i = 14$} \\
\Rnode{a5}{\hspace*{0.0ex}} \hspace*{0.7cm} \Rnode{a6}{~~$ i = 15$} \\
\Rnode{a7}{\hspace*{0.0ex}} \hspace*{0.7cm} \Rnode{a8}{~~$i = 16$} \\
\Rnode{a9}{\hspace*{0.0ex}} \hspace*{0.7cm} \Rnode{a10}{~~$i = 17$} \\
\Rnode{a11}{\hspace*{0.0ex}} \hspace*{0.7cm} \Rnode{a12}{~~$ i = 18$} \\
\end{tabular}}
\ncline[linestyle=solid,linewidth=\LineWidth,linecolor=color1791.0042]{a1}{a2}
\ncput{\psdot[dotstyle=*,dotsize=\MarkerSize,linecolor=color1791.0042]}
\ncline[linestyle=solid,linewidth=\LineWidth,linecolor=color1792.0037]{a3}{a4}
\ncput{\psdot[dotstyle=Bsquare,dotsize=\MarkerSize,linecolor=color1792.0037]}
\ncline[linestyle=solid,linewidth=\LineWidth,linecolor=color1793.0037]{a5}{a6}
\ncput{\psdot[dotstyle=Bo,dotsize=\MarkerSize,linecolor=color1793.0037]}
\ncline[linestyle=solid,linewidth=\LineWidth,linecolor=color1794.0037]{a7}{a8}
\ncput{\psdot[dotstyle=Bpentagon,dotsize=\MarkerSize,linecolor=color1794.0037]}
\ncline[linestyle=solid,linewidth=\LineWidth,linecolor=color1795.0035]{a9}{a10}
\ncput{\psdot[dotstyle=Basterisk,dotsize=\MarkerSize,linecolor=color1795.0035]}
\ncline[linestyle=solid,linewidth=\LineWidth,linecolor=color1796.0035]{a11}{a12}
\ncput{\psdot[dotstyle=B|,dotsize=\MarkerSize,linecolor=color1796.0035]}

}%
}%
} 

\end{pspicture}%
}
\caption{The allocated rates $r_{li}$ from the $l^{th}$ carrier eNodeB to the $3^{rd}$ group of users with $1^{st}$ carrier eNodeB rate $20<R_1<300$ and $2^{nd}$ carrier eNodeB rate fixed at $R_2=100$.}
\label{fig:sim:rates_3rd_group}
\end{figure}

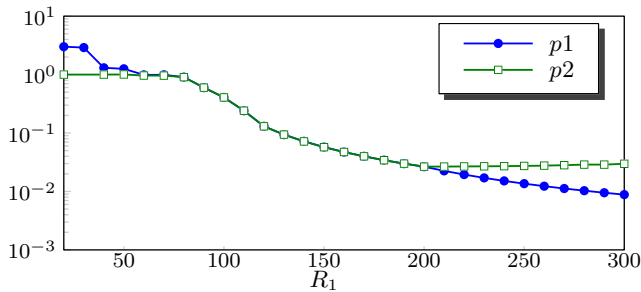
\begin{figure}
%
\psset{xunit=0.003333\plotwidth,yunit=0.097177\plotwidth}%
\begin{pspicture}(-15.898618,-3.444444)(308.294931,1.093567)%


\psline[linewidth=\AxesLineWidth,linecolor=GridColor](20.000000,-3.000000)(20.000000,-2.939141)
\psline[linewidth=\AxesLineWidth,linecolor=GridColor](50.000000,-3.000000)(50.000000,-2.939141)
\psline[linewidth=\AxesLineWidth,linecolor=GridColor](100.000000,-3.000000)(100.000000,-2.939141)
\psline[linewidth=\AxesLineWidth,linecolor=GridColor](150.000000,-3.000000)(150.000000,-2.939141)
\psline[linewidth=\AxesLineWidth,linecolor=GridColor](200.000000,-3.000000)(200.000000,-2.939141)
\psline[linewidth=\AxesLineWidth,linecolor=GridColor](250.000000,-3.000000)(250.000000,-2.939141)
\psline[linewidth=\AxesLineWidth,linecolor=GridColor](300.000000,-3.000000)(300.000000,-2.939141)
\psline[linewidth=\AxesLineWidth,linecolor=GridColor](20.000000,-3.000000)(23.600000,-3.000000)
\psline[linewidth=\AxesLineWidth,linecolor=GridColor](20.000000,-2.000000)(23.600000,-2.000000)
\psline[linewidth=\AxesLineWidth,linecolor=GridColor](20.000000,-1.000000)(23.600000,-1.000000)
\psline[linewidth=\AxesLineWidth,linecolor=GridColor](20.000000,0.000000)(23.600000,0.000000)
\psline[linewidth=\AxesLineWidth,linecolor=GridColor](20.000000,1.000000)(23.600000,1.000000)

\psline[linewidth=\AxesLineWidth,linecolor=GridColor](20.000000,-2.698970)(22.400000,-2.698970)
\psline[linewidth=\AxesLineWidth,linecolor=GridColor](20.000000,-2.522879)(22.400000,-2.522879)
\psline[linewidth=\AxesLineWidth,linecolor=GridColor](20.000000,-2.397940)(22.400000,-2.397940)
\psline[linewidth=\AxesLineWidth,linecolor=GridColor](20.000000,-2.301030)(22.400000,-2.301030)
\psline[linewidth=\AxesLineWidth,linecolor=GridColor](20.000000,-2.221849)(22.400000,-2.221849)
\psline[linewidth=\AxesLineWidth,linecolor=GridColor](20.000000,-2.154902)(22.400000,-2.154902)
\psline[linewidth=\AxesLineWidth,linecolor=GridColor](20.000000,-2.096910)(22.400000,-2.096910)
\psline[linewidth=\AxesLineWidth,linecolor=GridColor](20.000000,-2.045757)(22.400000,-2.045757)
\psline[linewidth=\AxesLineWidth,linecolor=GridColor](20.000000,-1.698970)(22.400000,-1.698970)
\psline[linewidth=\AxesLineWidth,linecolor=GridColor](20.000000,-1.522879)(22.400000,-1.522879)
\psline[linewidth=\AxesLineWidth,linecolor=GridColor](20.000000,-1.397940)(22.400000,-1.397940)
\psline[linewidth=\AxesLineWidth,linecolor=GridColor](20.000000,-1.301030)(22.400000,-1.301030)
\psline[linewidth=\AxesLineWidth,linecolor=GridColor](20.000000,-1.221849)(22.400000,-1.221849)
\psline[linewidth=\AxesLineWidth,linecolor=GridColor](20.000000,-1.154902)(22.400000,-1.154902)
\psline[linewidth=\AxesLineWidth,linecolor=GridColor](20.000000,-1.096910)(22.400000,-1.096910)
\psline[linewidth=\AxesLineWidth,linecolor=GridColor](20.000000,-1.045757)(22.400000,-1.045757)
\psline[linewidth=\AxesLineWidth,linecolor=GridColor](20.000000,-0.698970)(22.400000,-0.698970)
\psline[linewidth=\AxesLineWidth,linecolor=GridColor](20.000000,-0.522879)(22.400000,-0.522879)
\psline[linewidth=\AxesLineWidth,linecolor=GridColor](20.000000,-0.397940)(22.400000,-0.397940)
\psline[linewidth=\AxesLineWidth,linecolor=GridColor](20.000000,-0.301030)(22.400000,-0.301030)
\psline[linewidth=\AxesLineWidth,linecolor=GridColor](20.000000,-0.221849)(22.400000,-0.221849)
\psline[linewidth=\AxesLineWidth,linecolor=GridColor](20.000000,-0.154902)(22.400000,-0.154902)
\psline[linewidth=\AxesLineWidth,linecolor=GridColor](20.000000,-0.096910)(22.400000,-0.096910)
\psline[linewidth=\AxesLineWidth,linecolor=GridColor](20.000000,-0.045757)(22.400000,-0.045757)
\psline[linewidth=\AxesLineWidth,linecolor=GridColor](20.000000,0.301030)(22.400000,0.301030)
\psline[linewidth=\AxesLineWidth,linecolor=GridColor](20.000000,0.477121)(22.400000,0.477121)
\psline[linewidth=\AxesLineWidth,linecolor=GridColor](20.000000,0.602060)(22.400000,0.602060)
\psline[linewidth=\AxesLineWidth,linecolor=GridColor](20.000000,0.698970)(22.400000,0.698970)
\psline[linewidth=\AxesLineWidth,linecolor=GridColor](20.000000,0.778151)(22.400000,0.778151)
\psline[linewidth=\AxesLineWidth,linecolor=GridColor](20.000000,0.845098)(22.400000,0.845098)
\psline[linewidth=\AxesLineWidth,linecolor=GridColor](20.000000,0.903090)(22.400000,0.903090)
\psline[linewidth=\AxesLineWidth,linecolor=GridColor](20.000000,0.954243)(22.400000,0.954243)

{ \footnotesize 
\rput[t](50.000000,-3.060859){$50$}
\rput[t](100.000000,-3.060859){$100$}
\rput[t](150.000000,-3.060859){$150$}
\rput[t](200.000000,-3.060859){$200$}
\rput[t](250.000000,-3.060859){$250$}
\rput[t](300.000000,-3.060859){$300$}
\rput[r](16.400000,-3.000000){$10^{-3}$}
\rput[r](16.400000,-2.000000){$10^{-2}$}
\rput[r](16.400000,-1.000000){$10^{-1}$}
\rput[r](16.400000,0.000000){$10^{0}$}
\rput[r](16.400000,1.000000){$10^{1}$}
} 

\psframe[linewidth=\AxesLineWidth,dimen=middle](20.000000,-3.000000)(300.000000,1.000000)

{ \small 
\rput[b](150.000000,-3.774444){
\begin{tabular}{c}
$R_1$\\
\end{tabular}
}
} 

\newrgbcolor{color1791.0042}{0  0  1}
\psline[plotstyle=line,linejoin=1,showpoints=false,dotstyle=*,dotsize=\MarkerSize,linestyle=solid,linewidth=\LineWidth,linecolor=color1791.0042]
(300.000000,11.267571)(300.000000,11.267571)
\psline[plotstyle=line,linejoin=1,showpoints=true,dotstyle=*,dotsize=\MarkerSize,linestyle=solid,linewidth=\LineWidth,linecolor=color1791.0042]
(20.000000,0.477122)(30.000000,0.462306)(40.000000,0.116102)(50.000000,0.098945)(60.000000,-0.004721)
(70.000000,-0.003903)(80.000000,-0.043913)(90.000000,-0.222558)(100.000000,-0.391209)(110.000000,-0.618330)
(120.000000,-0.884601)(130.000000,-1.027720)(140.000000,-1.142078)(150.000000,-1.240871)(160.000000,-1.326228)
(170.000000,-1.398989)(180.000000,-1.464764)(190.000000,-1.523826)(200.000000,-1.575851)(210.000000,-1.647423)
(220.000000,-1.710694)(230.000000,-1.767743)(240.000000,-1.819346)(250.000000,-1.866424)(260.000000,-1.909683)
(270.000000,-1.949624)(280.000000,-1.986665)(290.000000,-2.021920)(300.000000,-2.054294)

\newrgbcolor{color1792.0037}{0         0.5           0}
\psline[plotstyle=line,linejoin=1,showpoints=false,dotstyle=Bsquare,dotsize=\MarkerSize,linestyle=solid,linewidth=\LineWidth,linecolor=color1792.0037]
(300.000000,21.941951)(300.000000,21.941951)
\psline[plotstyle=line,linejoin=1,showpoints=true,dotstyle=Bsquare,dotsize=\MarkerSize,linestyle=solid,linewidth=\LineWidth,linecolor=color1792.0037]
(20.000000,-0.000011)(40.000000,-0.000001)(50.000000,0.001885)(60.000000,-0.016327)(70.000000,-0.017572)
(80.000000,-0.043886)(90.000000,-0.223000)(100.000000,-0.390852)(110.000000,-0.618195)(120.000000,-0.885541)
(130.000000,-1.027120)(140.000000,-1.144105)(150.000000,-1.242140)(160.000000,-1.324721)(170.000000,-1.398760)
(180.000000,-1.465071)(190.000000,-1.523622)(200.000000,-1.575516)(210.000000,-1.574261)(220.000000,-1.571817)
(230.000000,-1.569832)(240.000000,-1.567069)(250.000000,-1.563227)(260.000000,-1.557896)(270.000000,-1.550520)
(280.000000,-1.540351)(290.000000,-1.540351)(300.000000,-1.526403)

{ \small 
\rput[tl](207.200000,0.878282){%
\psshadowbox[framesep=0pt,linewidth=\AxesLineWidth]{\psframebox*{\begin{tabular}{l}
\Rnode{a1}{\hspace*{0.0ex}} \hspace*{0.7cm} \Rnode{a2}{~~$p1$} \\
\Rnode{a3}{\hspace*{0.0ex}} \hspace*{0.7cm} \Rnode{a4}{~~$p2$} \\
\end{tabular}}
\ncline[linestyle=solid,linewidth=\LineWidth,linecolor=color1791.0042]{a1}{a2}
\ncput{\psdot[dotstyle=*,dotsize=\MarkerSize,linecolor=color1791.0042]}
\ncline[linestyle=solid,linewidth=\LineWidth,linecolor=color1792.0037]{a3}{a4}
\ncput{\psdot[dotstyle=Bsquare,dotsize=\MarkerSize,linecolor=color1792.0037]}

}%
}%
} 

\end{pspicture}%
\caption{The $1^{st}$ carrier shadow price $p_1$ and $2^{nd}$ carrier shadow price $p_2$ with the $1^{st}$ carrier eNodeB rate $20<R_1<300$ and the $2^{nd}$ carrier eNodeB rate $R_2=100$.}
\label{fig:sim:shadow_price}
\end{figure}
\section{Simulation Results}\label{sec:sim}

Algorithm (\ref{alg:UE_FK}) and (\ref{alg:eNodeB_FK}) were applied to various logarithmic and sigmoidal-like utility functions with different parameters in MATLAB. The simulation results showed convergence to the global optimal solution. In this section, we present the simulation results of two carriers and 18 UEs shown in Figure \ref{fig:sim:System_Model}. The UEs are divided into three groups. The $1^{st}$ group is connected to $1^{st}$ carrier eNodeB only (index $i=1,2,3,4,5,6$), the $2^{nd}$ group is connected to $2^{nd}$ carrier eNodeB only (index $i=7,8,9,10,11,12$), and the $3^{rd}$ group is connected to both $1^{st}$ and $2^{nd}$ carriers eNodeBs (index $i=13,14,15,16,17,18$). We use three normalized sigmoidal-like functions that are expressed by equation (\ref{eqn:sigmoid}) with different parameters. The used parameters are $a = 5$, $b=10$ corresponding to a sigmoidal-like function that is an approximation to a step function at rate $r =10$ (e.g. VoIP and is the utility of UEs with indexes $i=1,7,
13$), $a = 3$, $b=20$ corresponding to a sigmoidal-like function that is an approximation of an adaptive real-time application with inflection point at rate $r=20$ (e.g. standard definition video streaming and is the utility of UEs with indexes $i=2,8,14$), and $a = 1$,  $b=30$ corresponding to a sigmoidal-like function that is also an approximation of an adaptive real-time application with inflection point at rate $r=30$ (e.g. high definition video streaming and is the utility of UEs with indexes $i=3,9,15$), as shown in Figure \ref{fig:sim:Utilities}. We use three logarithmic functions that are expressed by equation (\ref{eqn:log}) with $r_{max} =100$ and different $k_i$ parameters which are approximations for delay-tolerant applications (e.g. FTP). We use $k =15$ for UEs with indexes $i=4,10,16$, $k =3$ for UEs with indexes $i=5,11,17$, and $k = 0.5$ for UEs with indexes $i=6,12,18$, as shown in Figure \ref{fig:sim:Utilities}. We set $\delta =10^{-3}$, the $1^{st}$ carrier eNodeB rate $R_1$ takes values 
between 20 and 300 with step of 10, and the $2^{nd}$ carrier eNodeB rate is fixed at $R_2 = 100$. 

In Figure \ref{fig:sim:rates31} and \ref{fig:sim:rates32}, when the resources available at the $2^{nd}$ carrier are more than that at $1^{st}$ carrier, we observe that the resources allocated to the $3^{rd}$ group users are from the $2^{nd}$ carrier. With the increase in the $1^{st}$ carrier eNodeB resources $R_1$, we observe a gradual increase in the $3^{rd}$ group rates allocated from the $1^{st}$ carrier and a gradual decrease from the $2^{nd}$ carrier eNodeB resources. This shift in the resource allocation is due to the decrease in the price of resources from $1^{st}$ carrier as it has more resources. In Figure \ref{fig:sim:shadow_price}, we observe that the shadow price of the $1^{st}$ carrier eNodeB is higher than that of $2^{nd}$ carrier eNodeB for $R_1\leq 50$, approximately equal for $60<R_1\leq200$, and lower for $R_1>200$. Therefore, the $3^{rd}$ group users, that are covered by $1^{st}$ and $2^{nd}$ carrier eNodeBs, receive the minimum price for resources by aggregating resources from both 
eNodeBs as intended by our algorithm.
\section{Conclusion}\label{sec:conclude}
In this paper, we introduced a novel resource allocation optimization problem with joint carrier aggregation. We considered mobile users running both real-time and delay-tolerant applications with utility proportional fairness allocation policy in 4G-LTE system. We proved that the global optimal solution exists and is tractable for mobile stations with logarithmic and sigmoidal-like utility functions. We presented a distributed algorithm for allocating resources from different carriers optimally to mobile users. Our algorithm ensures fairness in the utility percentage achieved by the allocated resources for all users. Therefore, the algorithm gives priority to the users with adaptive real-time applications with guaranteed minimum QoS for all service subscribers. In addition, our algorithm guarantees allocating resources from different carriers with the lowest resource price to mobile users. We showed through simulations that our algorithm converges to the optimal rate allocation with the lowest possible 
resource price.

\bibliographystyle{ieeetr}
\bibliography{pubs}
\end{document}